\newcommand{\im}{\mathrm{i}}
\newcommand{\N}{\mathbb{N}}
\newcommand{\Z}{\mathbb{Z}}
\newcommand{\R}{\mathbb{R}}
\newcommand{\C}{\mathbb{C}}
\newcommand{\defeq}{\coloneqq}
\newcommand{\tens}{\otimes}
\DeclareMathOperator{\ctens}{\hat{\otimes}}
\newcommand{\ds}{\circ}
\newcommand{\cH}{\mathcal{H}}
\newcommand{\toi}{\hookrightarrow}
\newcommand{\sig}[1]{[#1]}
\newcommand{\fdg}[1]{|#1|}
\newcommand{\sts}{\mathcal{M}}
\newcommand{\lhs}{\langle\langle}
\newcommand{\rhs}{\rangle\rangle}
\newcommand{\cD}{\mathcal{D}}
\newcommand{\cDd}{\mathcal{D}^{\ds}}
\newcommand{\cDr}{\mathcal{D}^{\R}}
\newcommand{\cDrd}{\mathcal{D}^{\R\ds}}
\newcommand{\cDp}{\mathcal{D}^{+}}
\newcommand{\cDpd}{\mathcal{D}^{+\ds}}
\newcommand{\cF}{\mathcal{F}}
\newcommand{\cFr}{\mathcal{F}^{\R}}
\newcommand{\hol}{\mathrm{h}}
\newcommand{\ahol}{\overline{\mathrm{h}}}
\newcommand{\one}{\mathbf{1}}
\theoremstyle{definition}
\newtheorem{dfn}{Definition}[section]
\theoremstyle{plain}
\newtheorem{lem}[dfn]{Lemma}
\newtheorem{thm}[dfn]{Theorem}
\begin{document}

\begin{titlepage}
\title{\textbf{Towards state locality in quantum\\ field theory: free fermions}}
\author{Robert Oeckl\footnote{email: robert@matmor.unam.mx}\\ \\
Centro de Ciencias Matemáticas,\\
Universidad Nacional Autónoma de México,\\
C.P.~58190, Morelia, Michoacán, Mexico}
\date{UNAM-CCM-2013-2\\ 18 July 2013\\ 10 December 2015 (v2)\\ 12 September 2016 (v3)}

\maketitle

\vspace{\stretch{1}}

\begin{abstract}

Hilbert spaces of states can be constructed in standard quantum field theory only for infinitely extended spacelike hypersurfaces, precluding a more local notion of state. In fact, the Reeh-Schlieder Theorem prohibits the localization of states on pieces of hypersurfaces in the standard formalism. From the point of view of geometric quantization the problem lies in the non-locality of the complex structures associated to hypersurfaces in standard quantization. We show that using a weakened version of the positive formalism puts this problem into a new perspective. This is a local TQFT type formalism based on super-operators and mixed state spaces rather than on amplitudes and pure state spaces as the one of Atiyah-Segal. In particular, we show that in the case of purely fermionic degrees of freedom the complex structure can be dispensed with when the notion of state is suitably generalized. These generalized states do localize on compact hypersurfaces with boundaries. For the simplest case of free fermionic fields we embed this in a rigorous and functorial quantization scheme yielding a local description of the quantum theory. Crucially, no classical data is needed beyond the structures evident from a Lagrangian setting. When the classical data is augmented with complex structures on hypersurfaces, the quantum data correspondingly augment to the full positive formalism. This scheme is applicable to field theory in curved spacetime, but also to field theories without metric background.

\end{abstract}

\vspace{\stretch{1}}
\end{titlepage}

\section{Introduction}

In the quantum theory of systems with finitely many degrees of freedom there is a simple relation between the state space of a composite system and those of its components: The Hilbert space of states of the composite system is the tensor product of the Hilbert spaces of states of the individual subsystems. This is no longer true at the more fundamental level of quantum field theory. If we understand by ``subsystem'' the physics happening in a certain part of space or spacetime, there is no Hilbert space of states that we can assign to this ``subsystem'' in such a way that the physics outside of this part of space or spacetime is completely excluded. In particular, we cannot piece together the complete Hilbert space of states as simply the tensor product of Hilbert spaces associated to different such ``subsystems''. This has been known for a long time and we shall refer to this as the \emph{state locality problem} in quantum field theory.

 One way to see this is via the Reeh-Schlieder Theorem \cite{ReSc:unitequiv}. This says essentially that applying operators representing observables in an arbitrarily small spacetime region to the vacuum state yields a dense subspace of the whole Hilbert space of states. But if we had a Hilbert space of states associated to a part of space or spacetime then operators in that part should only act on this ``partial'' Hilbert space. So the Reeh-Schlieder Theorem would imply that this ``partial'' Hilbert space must already be the whole Hilbert space.\footnote{We are somewhat simplifying the discussion here in assuming also a ``localizable'' notion of vacuum. One could imagine more complicated set-ups where the Reeh-Schlieder Theorem would not straightforwardly apply. But then the physical interpretation of the ``partial'' Hilbert space would also be less straightforward.}

It is clear that in order to solve the state locality problem we cannot stick to Hilbert spaces and their tensor products. In fact, if we are willing to considerably modify the notion of state, \emph{algebraic quantum field theory} \cite{Haa:lqp} may be seen to provide a solution. There, the fundamental mathematical objects are observable algebras associated to spacetime regions. States are secondary objects that arise as functionals on these algebras. Thus, states may trivially be viewed as localized on the same spacetime regions. Moreover, a global state induces a localized state by restriction of the corresponding functional from the global algebra of observables to its subalgebra localized in the spacetime region of interest.

We are interested, however, in a stronger and more fundamental notion of state locality, one that implies a composability analogous to the tensor product of Hilbert spaces. In fact, the notion we shall consider applies already at the level of merely topological spacetime, before even the introduction of a metric background.

The framework we shall use to make precise and address the state locality problem is the \emph{general boundary formulation} (GBF) of quantum theory \cite{Oe:gbqft}. The GBF is being developed with the aim of providing a formulation of quantum theory that is manifestly spacetime local and makes sense even in the absence of separate notions of time, space or metric (although relying on a weak notion of spacetime). This is to be contrasted with the standard formulation of quantum theory (in terms of a Hilbert space of states and observables as operators on it) which is not local in space and relies heavily on a predetermined notion of time.

The GBF consists of two components, exactly as the standard formulation of quantum theory. The first specifies what a quantum theory ``is''. This takes the form of an axiomatic formalism describing the mathematical ingredients of a quantum theory and their interrelations. The second specifies how predictions for measurement outcomes are extracted from the formalism. This takes the form of rules to calculate probabilities and expectation values.

There has been a long-standing effort, led principally by G.~Segal, on understanding the structure of quantum field theory in terms of the mathematical framework of \emph{topological quantum field theory} (TQFT) \cite{Ati:tqft}. (See \cite{Seg:cftproc,Seg:cftdef} for conformal field theory and \cite{Seg:localholom,Seg:fklectures} more generally.) A version of this, which we shall refer to as the \emph{amplitude formalism}, has served so far precisely as the first component of the GBF. (This was first proposed in \cite{Oe:boundary}, for the latest version see Appendix~A in \cite{Oe:dmf}.) Roughly, in the amplitude formalism a Krein space of generalized states is associated to each hypersurface in spacetime. Moreover, to each region of spacetime there is associated a linear complex valued \emph{amplitude map} on the Krein space of the boundary. For any region there is moreover one such map per observable that it may contain, called \emph{observable map}.

The second component of the GBF consists of the rules to extract probabilities and expectation values. (The probability rule was first formulated in \cite{Oe:gbqft} and that for expectation values in \cite{Oe:obsgbf}.)

Suppose we are given a quantum field theory on a globally hyperbolic spacetime in terms of the objects of the amplitude formalism (and satisfying its axioms). It is then straightforward \cite{Oe:holomorphic,Oe:feynobs,Oe:freefermi} to extract the ingredients of the standard formulation. Notably, ``the'' Hilbert space arises as associated to an (arbitrarily chosen) Cauchy hypersurface. Moreover, the usual transition probabilities and expectation values are recovered exactly for spacetime regions bounded by pairs of Cauchy hypersurfaces.

In this sense, the GBF reproduces exactly the results of the standard formulation. Indeed, it is a \emph{reformulation} and not a \emph{modification} of quantum theory.
Conversely, if we were given a quantum field theory in terms of a Hilbert space and spacetime localized field operators on a globally hyperbolic manifold, we can straightforwardly construct the objects of the amplitude formalism and satisfy its axioms.\footnote{The case of non-relativistic quantum theory is even simpler as spacetime may then be taken to be the real line representing time.} That is, we can do so \emph{if} we restrict ourselves to Cauchy hypersurfaces and spacetime regions that are bounded by pairs of these.

So, in a rather trivial way a restricted version of the GBF is essentially equivalent to the standard formulation. However, the main motivation for setting up the GBF arises from its promise to describe physics in a manifestly spacetime local way. What we need is for the amplitude formalism to include spacetime regions that are compact and arbitrarily small, and moreover regions that permit the gluing together to form larger regions. This in turn means that we need hypersurfaces that arise as boundaries of such regions or arise as interfaces for the gluing. In particular, we need hypersurfaces with boundary that we can decompose into pieces that are hypersurfaces with boundary. This is where we meet again the state locality problem. It does not seem possible in general to assign Hilbert (or Krein) spaces to such bounded hypersurfaces and have the decomposition correspond to a tensor product decomposition.

Thus it seems that the GBF with the amplitude formalism must fail when we try to exploit fully its locality in quantum field theory. Before proceeding let us emphasize, however, that there is also a rich intermediate regime, where the state locality problem does not occur, but where the GBF is strictly more powerful than the standard formulation. This is the regime where we include timelike hypersurfaces (without boundary) and regions bounded by these. In this context the GBF has been successfully implemented in a range of examples, leading to interesting results, see e.g.\ \cite{Oe:timelike,Oe:kgtl,CoOe:smatrixgbf,CDO:adsproc,CoRa:unruh,CoRa:qftrindler}. Moreover, the state locality problem does not occur in two dimensions, where Yang-Mills theory was solved with hypersurfaces with boundary \cite{Oe:2dqym}. In a Riemannian spacetime the problem is also less severe \cite{CoOe:2deucl}.

Recently, a novel \emph{positive formalism} was proposed to take the role of the amplitude formalism in the GBF \cite{Oe:dmf}. The main motivation for its introduction was operationalism, but as we shall see it is useful in addressing state locality as well. From a physical perspective the transition from the amplitude formalism to the positive formalism is somewhat analogous to the transition from a pure state to a mixed state formalism in the standard formulation. From a mathematical perspective, the positive formalism is also an incarnation of TQFT, but one where realness and positivity play a distinguished role. Moreover, the path from the amplitude formalism to the positive formalism is a kind of ``taking the modulus square'' of a TQFT.

The present paper is precisely about implementing state locality via the positive formalism. For simplicity we restrict ourselves to linear field theory. Also, we do not consider observables in this paper. The main results will be limited to fermionic field theory.

In Section~\ref{sec:sloc} we review aspects of the state locality problem in the context of the GBF and outline the approach to its solution followed in the rest of this paper. In Section~\ref{sec:twistedpf} we review the relation between the amplitude formalism and the positive formalism and describe the ``modulus squared'' construction. Axioms of a ``real formalism'' are described in Section~\ref{sec:axrealpos}. This is a stripped down version of the positive formalism. In Section~\ref{sec:fockfeyn} we describe a new quantization scheme applicable to classical free fermionic field theories that targets this real formalism. This is related to the quantization scheme of \cite{Oe:freefermi} in Section~\ref{sec:addcomplex}. A discussion of the obtained results and outlook on further questions is presented in Section~\ref{sec:outlook}. Appendix~\ref{sec:sts} contains a brief description of the notion of spacetime in the GBF. Appendix~\ref{sec:classax} contains axioms for classical free field theory.

\section{The state locality problem in the GBF}
\label{sec:sloc}

In this section we shall consider the state locality problem from the perspective of the quantization of free field theory in the amplitude formalism of the GBF as described in \cite{Oe:holomorphic,Oe:freefermi}.

The first ingredient is an appropriate notion of spacetime. This takes the form of a \emph{spacetime system}, i.e., a collection of hypersurfaces and regions such that the hypersurfaces can be decomposed, the regions glued, etc. These geometric objects may be manifolds in their own right or arise as submanifolds of a given manifold, see Appendix~\ref{sec:sts}. In contrast to previous versions of these axioms (see \cite{Oe:freefermi} for the latest), we explicitly allow hypersurfaces with boundaries as this is where we pretend to ``localize'' states.

Both classical and quantum field theories are encoded in terms of algebraic structures associated to the geometric objects of the spacetime system. Moreover, these structures satisfy an axiomatic system. The quantization scheme then provides a functor from a category of classical free field theories satisfying the classical axioms to a category of quantum theories satisfying the quantum axioms in the form of the \emph{amplitude formalism}.

We note that the quantization schemes of \cite{Oe:holomorphic,Oe:affine,Oe:freefermi} work perfectly well with the more general spacetime axioms as given here in Appendix~\ref{sec:sts}. That is, they are applicable without any essential modification when we include hypersurfaces with boundary. The only required change is a change of wording in the axioms (both classical and quantum), replacing the notion of ``disjoint decomposition'' of hypersurfaces with the more general one of ``decomposition''. All theorems etc.\ remain valid and the proofs generalize trivially.

There is a good reason, however, why the quantization schemes were not presented this way: The simplest quantum field theories in Minkowski spacetime satisfy the axioms for spacelike and (certain) timelike hypersurfaces that have no boundary. But they do not satisfy the axioms with hypersurfaces with boundary. The problem appears at the level of the axioms for classical field theory. This is what we focus on in the following.

The first relevant comment in this respect is that the axioms as presented in \cite{Oe:holomorphic} for free bosonic and in \cite{Oe:freefermi} for free fermionic field theory are not really purely classical. Rather they can be divided into a purely classical part and a part that already contains the seed for quantization and that we shall thus call semiclassical. By ``purely classical'' we mean here the structures that naturally arise from a Lagrangian analysis of field theory. The purely classical axioms are summarized in Appendix~\ref{sec:pclassax}. (For their motivation see \cite{Oe:holomorphic,Oe:freefermi}.)
We recall briefly how the semiclassical axioms can be obtained from these. There is a real vector space $L_{\Sigma}$ of ``germs of solutions'' associated to the hypersurface $\Sigma$ (Axiom C1). This space is additionally equipped with a non-degenerate anti-symmetric bilinear form $\omega_{\Sigma}$ (bosonic case) or a non-degenerate symmetric bilinear form $g_{\Sigma}$ (fermionic case). Consider now adding a compatible \emph{complex structure}, i.e., a linear map $J_{\Sigma}: L_{\Sigma}\to L_{\Sigma}$ such that $J_{\Sigma}^2=-\one$ and such that $J_{\Sigma}$ leaves $\omega_{\Sigma}$ or $g_{\Sigma}$ invariant. The maps $J_{\Sigma}$ also need to behave well under orientation reversal, hypersurface decomposition, gluing etc., but this is not so relevant for the moment. This complex structure makes $L_{\Sigma}$ into a \emph{complex} Hilbert space (bosonic case) or Krein space (fermionic case).\footnote{We make the simplifying assumption here that in the bosonic case the space $L_{\Sigma}$ happens to be positive-definite and complete with respect to the so defined inner product. The former property is an additional assumption and the latter may be achieved by completion.} The corresponding additional axioms are summarized in Appendix~\ref{sec:sclassax}. In total this leads to the full semiclassical axioms, see \cite[Section~4.2]{Oe:freefermi}. Concretely, the complex inner product on $L_{\Sigma}$ is,
\begin{align}
\{\phi',\phi\}_{\Sigma}& \defeq g_{\Sigma}(\phi',\phi)+2\im\omega_{\Sigma}(\phi',\phi) ,\quad\text{where}, \label{eq:complip} \\
g_{\Sigma}(\phi',\phi) & \defeq 2\omega_{\Sigma}(\phi',J_{\Sigma} \phi) \quad \text{(bosonic case)}, \label{eq:bsym} \\
\omega_{\Sigma}(\phi',\phi) & \defeq \frac{1}{2}g_{\Sigma}(J_{\Sigma} \phi',\phi) \quad \text{(fermionic case)} . \label{eq:fsympl}
\end{align}

For a free field theory the space $L_{\Sigma}$ is completely local as a symplectic vector space (in the bosonic case) or as a real Krein space (in the fermionic case). That is, if $\Sigma$ is decomposed into a union of hypersurfaces $\Sigma_1,\dots,\Sigma_n$ the space $L_{\Sigma}$ decomposes naturally as a symplectic vector space or as a real Krein space into a corresponding direct sum $L_{\Sigma_1}\oplus\cdots\oplus L_{\Sigma_n}$. For $L_{\Sigma}$ as a real vector space this is obvious since it is defined as a space of germs of solutions on $\Sigma$.\footnote{We interpret ``free field theory'' here in a strict sense as linear field theory without gauge symmetries.} But the bilinear form $\omega_{\Sigma}$ (bosonic case) or $g_{\Sigma}$ (fermionic case) is also local, because it arises as the integral over $\Sigma$ of a $d-1$-form in spacetime ($d$ being the spacetime dimension).

In contrast, the usual complex structure $J_{\Sigma}$ in field theory arises from a global distinction of ``forward'' versus ``backward'' (with respect to the hypersurface) propagating solutions. In particular, it cannot be represented as a differential operator. (It is merely a pseudodifferential operator.) In particular, $J_{\Sigma}$ does not decompose into a sum of complex structures for the components in a decomposition of $\Sigma$. This is how the state locality problem manifests itself in the present context.

There are proposals in a Riemannian context to address the problem which involve extending the component hypersurfaces with small neighborhoods and composing them with small overlaps. At the same time complex structures can be defined on the extended component hypersurfaces that are ``correct'' when viewed one the un-extended hypersurfaces. The algebraic operation associated to the composition is then a more complicated fusion tensor product \cite{Seg:fklectures}.

We shall follow a more minimalistic approach. Namely, we shall attempt to eliminate the need for the complex structure as much as possible. It is clear that a theory satisfying the semiclassical axioms gives rise to a quantum theory in terms of the amplitude formalism. But this in turn gives rise to a quantum theory in terms of the \emph{positive formalism} by ``taking the modulus square'' \cite{Oe:freefermi}. Crucially, some (operationally irrelevant) information is lost in the second step. This makes it natural to ask whether the semiclassical axioms can be weakened if we devise a quantization scheme that directly outputs to the positive formalism. The answer turns out to be affirmative, at least in the fermionic case.

We shall show that it is possible to devise a quantization scheme starting from the purely classical axioms of Appendix~\ref{sec:pclassax}, targeting a weakened version of the positive formalism. The weakening consists in removing the information about the positivity, while completely retaining real structure, see Section~\ref{sec:realax}. Crucially, we if we choose to add complex structures to a theory at the semiclassical level, the positive structure is recovered at the quantum level. Moreover, this happens in such a way that the outcome of the quantization scheme then becomes equivalent to the previous scheme that outputs to the amplitude formalism followed by ``taking the modulus square'' to arrive at the positive formalism. What is more, the addition may be partial. That is, without loss of consistency we can choose to add complex structures only on certain classes of hypersurfaces and recover positivity on these. Note in particular that the probability interpretation only acts at the level of regions and requires a positive structure only on their boundaries.

\section{From the AF to the PF with a twist}
\label{sec:twistedpf}

We shall review in this section the ``modulus square'' operation of converting a theory in the \emph{amplitude formalism} (AF) to a theory in the \emph{positive formalism} (PF). The procedure we present here is slightly modified as compared to the one presented in \cite{Oe:freefermi}. We shall refer to this modification, which turns out to be crucial, as a \emph{twist}.

We shall make extensive use in this section of notation and results from \cite{Oe:dmf}. Recall in particular, that a (strict) \emph{Krein space} is an indefinite inner product space $V$ with a canonical orthogonal decomposition $V=V_+\oplus V_-$. Moreover, the \emph{positive part} $V_+$ is positive definite and complete and the \emph{negative part} $V_-$ is negative definite and complete. Viewed as a $\Z_2$-grading, also called \emph{signature}, we use the following notation to distinguish positive and negative parts,
\begin{equation}
 \sig{v}\defeq\begin{cases} 0\, \text{if}\, v\in V_+\\
 1\, \text{if}\, v\in V_- . \end{cases}
\end{equation}
Also define the map $I: V\to V$ to be the identity on $V_+$ and minus the identity on $V_-$.
The vector spaces of interest are also equipped with the fermionic $\Z_2$-grading, or \emph{f-grading}, that we shall denote as,
\begin{equation}
 \fdg{v}\defeq\begin{cases} 0\, \text{if $v$ is of even degree}\\
 1\, \text{if $v$ is of odd degree} .
 \end{cases}
\end{equation}
We also write $V=V_0\oplus V_1$ for the decomposition of an f-graded space.

\subsection{Complex spaces}

Thus, suppose we are given a quantum theory in terms of the amplitude formalism, i.e., in terms of the Axioms in \cite[Section~6.1]{Oe:freefermi} or equivalently in \cite[Appendix~A.2]{Oe:dmf}.
To any hypersurface $\Sigma$ we associate the Krein space $\cD_{\Sigma}$ given by the tensor product,
\begin{equation}
 \cD_{\Sigma}\defeq\cH_{\Sigma}\ctens\cH_{\overline{\Sigma}} ,
\end{equation}
completed with respect to the following inner product,
\begin{equation}
\lhs \psi'\tens \eta',\psi\tens\eta\rhs_{\Sigma}
 \defeq\langle \psi',\psi\rangle_{\Sigma} (-1)^{\fdg{\eta'}\cdot\fdg{\eta}}
  \langle \eta',\eta\rangle_{\overline{\Sigma}} .
\label{eq:iptwist}
\end{equation}
Compared to the construction presented in \cite{Oe:dmf}, the inner product here differs by a \emph{twist} in the form of a sign factor corresponding to the interchange of the components associated to $\overline{\Sigma}$. More generally, the twist consists of considering the objects associated to the orientation reversed version of the hypersurfaces in question as if arranged in the opposite order.
Since the spaces $\cH_{\Sigma}$ and $\cH_{\overline{\Sigma}}$ are both f-graded, the tensor product space $\cD_{\Sigma}$ is naturally bigraded. We use the notation
\begin{equation}
 \cD_{\Sigma,ij}=\cH_{\Sigma,i}\ctens\cH_{\overline{\Sigma},j} .
\end{equation}
$\cD_{\Sigma}$ also carries a composite f-grading with the even part given by $\cD_{\Sigma,0}=\cD_{\Sigma,00}\oplus\cD_{\Sigma,11}$ and the odd part given by $\cD_{\Sigma,1}=\cD_{\Sigma,01}\oplus\cD_{\Sigma,10}$.

Next, we define the conjugate linear adapted involution $\iota_{\Sigma}^*:\cD_{\Sigma}\to\cD_{\overline{\Sigma}}$ as in \cite{Oe:dmf} as,
\begin{equation}
  \iota_{\Sigma}^*(\psi\tens\eta)\defeq (-1)^{\fdg{\psi}\cdot\fdg{\eta}}\iota_{\Sigma}(\psi)\tens\iota_{\overline{\Sigma}}(\eta) .
\end{equation}
Note that as in \cite{Oe:dmf} this map is f-graded isometric, in spite of the twist modified inner products.
We proceed to define the map associated to decompositions of hypersurfaces. Thus, let $\Sigma=\Sigma_1\cup\Sigma_2$ be such a decomposition. Define $\tau_{\Sigma_1,\Sigma_2;\Sigma}^*:\cD_{\Sigma_1}\ctens\cD_{\Sigma_2}\to\cD_{\Sigma}$ by,
\begin{multline}
  \tau_{\Sigma_1,\Sigma_2;\Sigma}^*\left((\psi_1\tens\eta_1)\tens (\psi_2\tens\eta_2)\right)\\
\defeq (-1)^{\fdg{\eta_1}\cdot\fdg{\psi_2}+\fdg{\eta_1}\cdot\fdg{\eta_2}} \tau_{\Sigma_1,\Sigma_2;\Sigma}(\psi_1\tens\psi_2)\tens\tau_{\overline{\Sigma_1},\overline{\Sigma_2};\overline{\Sigma}}(\eta_1\tens\eta_2)
\end{multline}
Here, the difference to \cite{Oe:dmf}, i.e., the twist, lies in the sign factor corresponding to an interchange of $\eta_1$ and $\eta_2$.
The compatibility between $\iota^*$ and $\tau^*$ is easily verified to take the form,
\begin{equation}
\tau^*_{\overline{\Sigma_1},\overline{\Sigma_2};\overline{\Sigma}}
 \left(\iota^*_{\Sigma_1}(\sigma_1)\tens\iota^*_{\Sigma_2}(\sigma_2)\right)
  =(-1)^{\fdg{\sigma_1}\cdot\fdg{\sigma_2}}\iota^*_\Sigma\left(\tau^*_{\Sigma_1,\Sigma_2;\Sigma}(\sigma_1\tens\sigma_2)\right) .
\end{equation}

We associate to any region $M$ the \emph{probability map} $A_M:\cDd_{\partial M}\to\C$, composed of amplitude maps as in \cite{Oe:dmf} by,
\begin{equation}
 A_M((\psi\tens\eta))\defeq \rho_M(\psi) \rho_{\overline{M}}(\eta) .
\end{equation}
Here $\cDd_{\partial M}$ is a dense subspace of $\cD_{\partial M}$ which we do not exactly specify here. However, it contains at least the subspace $\cH^{\ds}_{\partial M}\tens \cH^{\ds}_{\partial\overline{M}}$, which is already dense.
The probability map can be seen to give rise to the inner product on the Krein spaces associated to hypersurfaces as follows. Let $\Sigma$ be a hypersurface. The boundary $\partial\hat{\Sigma}$ of the associated slice region $\hat{\Sigma}$ decomposes into the disjoint union $\partial\hat{\Sigma}=\overline{\Sigma}\cup\Sigma'$, where $\Sigma'$ denotes a second copy of $\Sigma$. We then have the following relation between probability map on $\hat{\Sigma}$ and inner product on $\Sigma$,
\begin{equation}
A_{\hat{\Sigma}}\left(\tau^*_{\overline{\Sigma},\Sigma';\partial\hat{\Sigma}}\left(\iota_{\Sigma}(\sigma)\tens\sigma'\right)\right)=\lhs\sigma,\sigma'\rhs_{\Sigma} .
\end{equation}
Also note that since we assume $\rho_{\overline{M}}(\psi)=\overline{\rho_M(\iota_{\overline{\partial M}}(\psi))}$, we have,
\begin{equation}
 A_{\overline{M}}(\sigma)=\overline{A_{M}(\iota^*_{\overline{\partial M}}(\sigma))} .
\end{equation}

\subsection{Real structure}

The map $\iota^*_{\Sigma}:\cD_{\Sigma}\to\cD_{\overline{\Sigma}}$ plays the role of a \emph{real structure} (complex conjugation) combined with a flip of the orientation of the hypersurface. To disentangle the two we introduce an identification $\cD_{\Sigma}\to\cD_{\overline{\Sigma}}$ via,
\begin{equation}
 (\psi\tens\eta)\mapsto (\eta\tens\psi) .
\end{equation}
This identification is not mandated by the structures we have described so far, but it is compatible with them, most notably with the definition of the probability map. Indeed, the identification we choose here is different (in the fermionic case) from the one used in \cite{Oe:dmf}. It differs by a twist, in the form of a missing sign factor corresponding to the interchange of $\psi$ and $\eta$. Note that the inner product does get modified when flipping orientation with this identification. With the identification implicit this relation takes the form,
\begin{equation}
\lhs \sigma',\sigma\rhs_{\Sigma}= (-1)^{\fdg{\sigma}\cdot\fdg{\sigma'}} \lhs \sigma',\sigma\rhs_{\overline{\Sigma}} .
\end{equation}

 The remaining real structure $\cD_{\Sigma}\to\cD_{\Sigma}$, which we shall write with the usual notation of a complex conjugation, takes the form,
\begin{equation}
 \overline{(\psi\tens\eta)}=(-1)^{\fdg{\psi}\cdot\fdg{\eta}} (\iota_{\overline{\Sigma}}(\eta)\tens\iota_{\Sigma}(\psi)) .
\end{equation}
The real structure commutes with the probability map via,
\begin{equation}
 A_M(\overline{\sigma})=\overline{A_M(\sigma)} .
\label{eq:realstruc}
\end{equation}
Crucially it also commutes with the inner product, i.e.,
\begin{equation}
 \lhs \overline{\sigma'},\overline{\sigma}\rhs_{\Sigma}
 = \overline{\lhs \sigma',\sigma \rhs_{\Sigma}} ,
\end{equation}
and with hypersurface decomposition, i.e.,
\begin{equation}
\tau^*_{\Sigma_1\Sigma_2;\Sigma}(\overline{\sigma_1}\tens\overline{\sigma_2})
 = \overline{\tau^*_{\Sigma_1\Sigma_2;\Sigma}(\sigma_1\tens\sigma_2)} .
\end{equation}
This is in contrast to the real structure defined in \cite{Oe:dmf}. As a consequence, when restricting the spaces $\cD_{\Sigma}$ to real subspaces $\cDr_{\Sigma}$, all relevant properties restrict as well. That is, we can write the axioms of the positive formalism here completely in terms of real vector spaces. (This was possible in \cite{Oe:dmf} only in the bosonic case.)

\subsection{Positive structure}

The real structure used in \cite{Oe:dmf} was induced from the following identification of the space $\cD_{\Sigma}$ with the space of Hilbert-Schmidt operators on $\cH_{\Sigma}$ via,
\begin{equation}
  (\psi\tens\eta)\xi \defeq \psi\, \langle I \iota_{\overline{\Sigma}}(\eta),\xi\rangle_{\Sigma} .
\end{equation}
The real structure was then given by the notion of \emph{adjoint} of an operator in the Hilbert space sense,
\begin{equation}
 \langle I \sigma^\dagger\xi,\eta\rangle_{\Sigma} =
 \langle I\xi,\sigma\eta\rangle_{\Sigma} .
\end{equation}
In contrast, the real structure here is given in terms of the operator point of view by,
\begin{equation}
 \langle I \overline{\sigma}\xi,\eta\rangle_{\Sigma} = (-1)^{\fdg{\xi}\cdot\fdg{\eta}}
 \langle I\xi,\sigma\eta\rangle_{\Sigma} .
\end{equation}
Note, however, that both notions coincide on $\cD_{\Sigma,00}$. This is significant, because for $\Sigma=\partial M$, this is the only component of $\cD_{\partial M}$ in terms of the bigrading, on which the probability map $A_M$ does not vanish. This becomes explicit in the form $A_M$ takes in the operator picture, when keeping in mind the f-gradedness of the amplitude maps,
\begin{equation}
A_M(\sigma)=\sum_{i\in I} \overline{\rho_M(\xi_i)} \rho_M(\sigma\xi_i) .
\end{equation}
Here $\{\xi_i\}_{i\in I}$ is an ON-basis of $\cH_{\Sigma}$.

The operator point of view provides in addition a \emph{positive structure}, making the real vector space of self-adjoint operators into an \emph{ordered vector space}. This is encoded through a subset $\cDp_{\Sigma}\subseteq \cD_{\Sigma}$ of positive elements in the form of positive operators. This positive structure is crucial in the probability interpretation of the positive formalism (compare \cite[Section~3.1]{Oe:dmf}). The probability interpretation relies on the probability maps $A_M$. Thus, relevant is only the set of positive elements $\cDp_{\Sigma,00}$ in the subspace $\cD_{\Sigma,00}$. In contrast to $\cDp_{\Sigma}$ as a whole, this is compatible with the real structure (\ref{eq:realstruc}) defined above.

The compatibility of the positive structure with the other structures is as follows. Most important is the already mentioned compatibility with the probability map,
\begin{equation}
 A_M(\sigma)\ge 0 \quad\text{if}\, \sigma\in \cDpd_{\Sigma} .
\end{equation}
For the inner product we get,
\begin{equation}
\lhs I^* \sigma',\sigma\rhs_{\Sigma} \ge 0\quad\text{if}\, \sigma,\sigma'\in\cDp_{\Sigma,00} ,
\end{equation}
where $I^*:\cD_{\Sigma}\to\cD_{\Sigma}$ is the identity on the positive part of $\cD_{\Sigma}$ and minus the identity on the negative part of $\cD_{\Sigma}$.
For hypersurface decompositions we have,
\begin{equation}
 \tau^*_{\Sigma_1,\Sigma_2;\Sigma}(\sigma_1\tens\sigma_2) \in \cDp_{\Sigma,00+}\quad\text{if}\, \sigma_1\in\cDp_{\Sigma_1,00+},\sigma_2 \in \cDp_{\Sigma_2,00+} .
\end{equation}

\section{Axioms: Real plus Positive}
\label{sec:axrealpos}

We present in this section a version of the axioms of the Positive Formalism (PF) that is twist modified as compared to the axioms given in \cite{Oe:dmf} and thus fits the structures of the previous Section~\ref{sec:twistedpf}. Moreover, we split the axioms in a set that ignores the positive structure plus additional axioms to capture the positive structure. We do this with the anticipation that the axioms dealing with the positive structure might have a more limited applicability. We use the letter ``R'' for ``real'' to index the first set of axioms to express the prominence of real structure in these.

\subsection{Real axioms}
\label{sec:realax}

\begin{itemize}
\item[(R1)] Associated to each oriented hypersurface $\Sigma$ is a real separable f-graded Krein space $\cDr_\Sigma$ with inner product $\lhs\cdot,\cdot\rhs_\Sigma$. $\cDr_{\overline{\Sigma}}$ is identified with $\cDr_{\Sigma}$, but its inner product is modified via,
\begin{equation}
\lhs \sigma',\sigma\rhs_{\Sigma}= (-1)^{\fdg{\sigma}\cdot\fdg{\sigma'}} \lhs \sigma',\sigma\rhs_{\overline{\Sigma}} .
\end{equation}
\item[(R2)] Suppose the hypersurface $\Sigma$ decomposes into a union of hypersurfaces $\Sigma=\Sigma_1\cup\cdots\cup\Sigma_n$. Then, there is an isometric isomorphism of Krein spaces
  $\tau^*_{\Sigma_1,\dots,\Sigma_n;\Sigma}:\cDr_{\Sigma_1}\ctens\cdots\ctens\cDr_{\Sigma_n}\to\cDr_\Sigma$. The maps $\tau^*$ satisfy obvious associativity conditions. Moreover, in the case $n=2$ the map $(\tau^*_{\Sigma_2,\Sigma_1;\Sigma})^{-1}\circ \tau^*_{\Sigma_1,\Sigma_2;\Sigma}:\cDr_{\Sigma_1}\ctens\cDr_{\Sigma_2}\to\cDr_{\Sigma_2}\ctens\cDr_{\Sigma_1}$ is the f-graded transposition,
\begin{equation}
 \sigma_1\tens\sigma_2\mapsto (-1)^{\fdg{\sigma_1}\cdot\fdg{\sigma_2}}\sigma_2\tens\sigma_1 .
\end{equation}
Also the maps for opposite orientations are related via,
\begin{equation}
 \tau^*_{\Sigma_1,\Sigma_2;\Sigma}(\sigma_1\tens\sigma_2)=
 (-1)^{\fdg{\sigma_1}\cdot\fdg{\sigma_2}} \tau^*_{\overline{\Sigma_1},\overline{\Sigma_2};\overline{\Sigma}}(\sigma_1\tens\sigma_2) .
\end{equation}
\item[(R4)] Associated to each region $M$ is an f-graded linear map
  from a dense subspace $\cDrd_{\partial M}$ of $\cDr_{\partial M}$ to the real
  numbers, $A_M:\cDrd_{\partial M}\to\R$. This is called the
  \emph{probability map}. It is orientation independent, i.e, $A_{\overline{M}}=A_M$.
\item[(R3x)] Let $\Sigma$ be a hypersurface. The boundary $\partial\hat{\Sigma}$ of the associated slice region $\hat{\Sigma}$ decomposes into the disjoint union $\partial\hat{\Sigma}=\overline{\Sigma}\cup\Sigma'$, where $\Sigma'$ denotes a second copy of $\Sigma$. Then, $\tau^*_{\overline{\Sigma},\Sigma';\partial\hat{\Sigma}}(\cDr_{\overline{\Sigma}}\tens\cDr_{\Sigma'})\subseteq\cDrd_{\partial\hat{\Sigma}}$. Moreover, $A_{\hat{\Sigma}}\circ\tau^*_{\overline{\Sigma},\Sigma';\partial\hat{\Sigma}}:\cDr_{\overline{\Sigma}}\tens\cDr_{\Sigma'}\to\R$ restricts to the inner product $\lhs\cdot,\cdot\rhs_\Sigma:\cDr_{\Sigma}\times\cDr_{\Sigma}\to\R$.
\item[(R5a)] Let $M_1$ and $M_2$ be regions and $M\defeq M_1\cup M_2$ be their disjoint union. Then $\partial M=\partial M_1\cup \partial M_2$ is also a disjoint union and $\tau^*_{\partial M_1,\partial M_2;\partial M}(\cDrd_{\partial M_1}\tens \cDrd_{\partial M_2})\subseteq \cDrd_{\partial M}$. Moreover, for all $\sigma_1\in\cDrd_{\partial M_1}$ and $\sigma_2\in\cDrd_{\partial M_2}$,
\begin{equation}
 A_{M}\left(\tau^*_{\partial M_1,\partial M_2;\partial M}(\sigma_1\tens\sigma_2)\right)= A_{M_1}(\sigma_1) A_{M_2}(\sigma_2) .
\end{equation}
\item[(R5b)] Let $M$ be a region with its boundary decomposing as a disjoint union $\partial M=\Sigma_1\cup\Sigma\cup \overline{\Sigma'}$, where $\Sigma'$ is a copy of $\Sigma$. Let $M_1$ denote the gluing of $M$ with itself along $\Sigma,\overline{\Sigma'}$ and suppose that $M_1$ is a region. Then, $\tau^*_{\Sigma_1,\Sigma,\overline{\Sigma'};\partial M}(\sigma\tens\xi\tens\xi)\in\cDrd_{\partial M}$ for all $\sigma\in\cDrd_{\partial M_1}$ and $\xi\in\cDrd_\Sigma$. Moreover, for any orthonormal basis $\{\xi_i\}_{i\in I}$ of $\cDr_\Sigma$ in $\cDrd_{\Sigma}$, we have for all $\sigma\in\cDrd_{\partial M_1}$,
\begin{equation}
 A_{M_1}(\sigma)\cdot \hat{c}(M;\Sigma,\overline{\Sigma'})
 =\sum_{i\in I} (-1)^{\sig{\xi_i}} A_M\left(\tau^*_{\Sigma_1,\Sigma,\overline{\Sigma'};\partial M}(\sigma\tens\xi_i\tens\xi_i)\right),
\end{equation}
where $\hat{c}(M;\Sigma,\overline{\Sigma'})\in\R$ is called the (modulus square of the) \emph{gluing anomaly factor} and depends only on the geometric data.
\item[(R5b*)] Let $M$ be a region with its boundary decomposing as a disjoint union $\partial M=\Sigma_1\cup\Sigma\cup \overline{\Sigma'}$, where $\Sigma'$ is a copy of $\Sigma$. Let $M_1$ denote the gluing of $M$ with itself along $\Sigma,\overline{\Sigma'}$ and suppose that $M_1$ is a region. Then, $\tau^*_{\Sigma_1,\Sigma,\overline{\Sigma'};\partial M}(\sigma\tens\xi\tens\xi)\in\cDrd_{\partial M}$ for all $\sigma\in\cDrd_{\partial M_1}$ and $\xi\in\cDrd_\Sigma$. Moreover, there is a direct system $\{\cDr_{\Sigma,\alpha}\}_{\alpha\in A_{\Sigma}}$ of finite dimensional Krein subspaces of $\cDr_{\Sigma}$ with limit $\cDr_{\Sigma}$ and a corresponding collection $\{\hat{c}_{\alpha}(M;\Sigma,\overline{\Sigma'})\}_{\alpha\in A_{\Sigma}}$ of real numbers satisfying the following: For any orthonormal basis $\{\xi_i\}_{i\in I}$ of $\cDr_\Sigma$, we have for all $\sigma\in\cDrd_{\partial M_1}$,
\begin{multline}
 \varinjlim_{\alpha}\Bigg(
 A_{M_1}(\sigma)\cdot \hat{c}_{\alpha}(M;\Sigma,\overline{\Sigma'}) \\
 - \sum_{i\in I} (-1)^{\sig{\xi_i}} A_M\left(\tau^*_{\Sigma_1,\Sigma,\overline{\Sigma'};\partial M}(\sigma\tens P_{\alpha}\xi_i\tens P_{\alpha}\xi_i)\right)\Bigg) = 0 .
\label{eq:rencompid}
\end{multline}
Here, $P_{\alpha}$ denotes the orthogonal projector onto $\cDr_{\Sigma,\alpha}$.
\end{itemize}

We have included here two versions of the axiom concerned with the gluing of two regions along a hypersurface: (R5b) and (R5b*). Only one is supposed to be satisfied, although the validity of (R5b) implies the validity of (R5b*). While (R5b) is simpler, it has the disadvantage that the quantization scheme presented here does not automatically satisfy it. Rather an additional \emph{integrability condition} has to be imposed. This is avoided in the \emph{renormalized} version (R5b*) of the axiom, see Section~\ref{sec:comp}. We also refer to \cite{Oe:freefermi} for more details on the meaning of this renormalization.

\subsection{Additional positive axioms}
\label{sec:plusposax}

We proceed to consider additional axioms to encode the positive structure. In anticipation of later use we restrict the collection of hypersurfaces on which a positive structure should exist. To be concrete, denote by $\sts_1^p\subseteq \sts_1$ a subcollection of hypersurfaces that we shall refer to as \emph{polarized hypersurfaces}. This subcollection satisfies certain closedness conditions: $\sts_1^p$ is closed under orientation reversal and if $\Sigma\in\sts_1$ has a decomposition whose components are all in $\sts^p_1$ then $\Sigma$ itself is in $\sts^p_1$.

\begin{itemize}
\item[(R1+)] The real Krein space $\cDr_{\Sigma}$ associated to a polarized hypersurface $\Sigma$ carries a bigrading that combines to the f-grading. Moreover, the subspace $\cDr_{\Sigma,00}$ is an Archimedean ordered vector space with generating proper cone $\cDp_{\Sigma,00}$ such that
\begin{equation}
\lhs I^* \sigma',\sigma\rhs_{\Sigma} \ge 0\quad\text{if}\,\, \sigma,\sigma'\in\cDp_{\Sigma,00} .
\end{equation}
\item[(R2+)] Suppose the polarized hypersurface $\Sigma$ decomposes into a union of polarized hypersurfaces $\Sigma=\Sigma_1\cup\cdots\cup\Sigma_n$. Then, the associated map respects positivity in the sense,
\begin{equation}
 \tau^*_{\Sigma_1,\dots,\Sigma_n;\Sigma}(\sigma_1\tens\cdots\tens\sigma_n) \in \cDp_{\Sigma,00+}\quad\text{if}\,\,\forall i:\,\sigma_i\in\cDp_{\Sigma_i,00+} .
\end{equation}
\item[(R4+)] Suppose the boundary $\partial M$ of the region $M$ is a polarized hypersurface. Then, the probability map $A_M$ is positive on $\cDpd_{\partial M,00}$, i.e., $A_M(\sigma)\ge 0$ if $\sigma\in \cDpd_{\partial M,00}$. Moreover, $A_M$ vanishes on $\cDrd_{\partial M,11}$.
\end{itemize}

\section{Fock-Feynman quantization}
\label{sec:fockfeyn}

We present in this section a quantization scheme that produces from a fermionic classical free field theory in terms of the axioms of Appendix~\ref{sec:pclassax} a general boundary quantum field theory in terms of the real axioms of Section~\ref{sec:realax}. Since the quantization scheme is ultimately based on Fock spaces and the Feynman path integral, we refer to it as a \emph{Fock-Feynman quantization}. For short we refer to this scheme here as the FFR scheme. It is quite similar to the one presented in \cite{Oe:freefermi}, which however targets the amplitude formalism. We refer to the latter here as the FFA scheme. The FFR scheme is indeed designed to emulate the FFA scheme when the latter is combined with the step of taking the ``modulus square'' as described in Section~\ref{sec:twistedpf}. We discuss this further in Section~\ref{sec:addcomplex}. We heavily rely on notation and results from \cite{Oe:freefermi} in this section.

\subsection{Spaces on hypersurfaces}

Given a hypersurface $\Sigma$, axiom (C1) provides us with a real Krein space $L_{\Sigma}$. We define the real Krein space $\cDr_{\Sigma}$ as the Fock space over $L_{\Sigma}$. To emphasize the fact that we are dealing with a \emph{real} Fock space, we use here the notation $\cFr$. (In \cite{Oe:freefermi} the notation $\cF$ was used for both real and complex Fock spaces, although mostly complex Fock spaces were used.)

Concretely, $\cFr(L_{\Sigma})$ is the completion of an $\N_0$-graded Krein space,
\begin{equation}
\cFr(L_{\Sigma})\defeq \widehat{\bigoplus_{n=0}^{\infty}}\, \cFr_n(L_{\Sigma}) .
\end{equation}
Here, $\cFr_n(L_{\Sigma})$ denotes the real vector space of continuous $n$-linear maps from $n$ copies of $L_{\Sigma}$ to $\R$, that are completely anti-symmetric. The f-grading on $\cFr_n(L_{\Sigma})$ is defined to be even if $n$ is even and odd if $n$ is odd.
The inner product $g_n$ on $\cFr_n(L_{\Sigma})$ induced from the inner product on $L_{\Sigma}$ can be presented as follows. Choose an ON-basis $\{\zeta_i\}_{i\in I}$ of $L_{\Sigma}$.\footnote{Recall that an ON-basis on a Krein space $V$ is defined to be an ON-basis on $V_+$ combined with an ON-basis on $V_-$ with minus its inner product.} Then,
\begin{equation}
g_n(\sigma',\sigma)\defeq n! \sum_{i_1,\dots,i_n\in I} (-1)^{\sig{\zeta_1}+\dots+\sig{\zeta_n}}
 \sigma'(\zeta_1,\dots,\zeta_n) \sigma(\zeta_1,\dots,\zeta_n) .
\end{equation}
Because of the continuity of $\sigma$ and $\sigma'$ this is well defined and independent of the ON-basis chosen. The inner product $g_{\Sigma}$ on $\cFr(L_{\Sigma})$ arises from the sum of the inner products $g_n$ and subsequent completion in the Krein space sense.

The identification of the spaces $L_{\Sigma}$ and $L_{\overline{\Sigma}}$ via Axiom (C1) induces an identification of the Fock spaces $\cDr_{\Sigma}$ and $\cDr_{\overline{\Sigma}}$. Instead of using this directly, we insert a twist reflecting the change of orientation of the hypersurface in a reversal of the order of arguments. Concretely, an element $\sigma_{\Sigma}$ of degree $n$ in $\cFr_{\Sigma}$ is identified with the element $\sigma_{\overline{\Sigma}}$ in $\cFr_{\overline{\Sigma}}$ as follows,
\begin{equation}
\sigma_{\overline{\Sigma}}(\xi_1,\dots,\xi_n)= \sigma_{\Sigma}(\xi_n,\dots,\xi_1) .
\end{equation}
Axiom (C2) relates the inner products on $L_{\Sigma}$ and $L_{\overline{\Sigma}}$. This induces a relation between the inner products of $\cDr_{\Sigma}$ and $\cDr_{\overline{\Sigma}}$ as,
\begin{equation}
\lhs \sigma',\sigma\rhs_{\Sigma}= (-1)^{\fdg{\sigma}\cdot\fdg{\sigma'}} \lhs \sigma',\sigma\rhs_{\overline{\Sigma}} .
\end{equation}
Thus, Axiom (R1) is satisfied.

Given a decomposition $\Sigma_1\cup\Sigma_2$ of the hypersurface $\Sigma$ we define $\tau^*_{\Sigma_1,\Sigma_2;\Sigma}:\cDr_{\sigma_1}\ctens\cDr_{\Sigma_2}\to\cDr_{\Sigma}$ as follows. Given $\sigma_1\in\cDr_{\sigma_1}$ of degree $m$ and $\sigma_2\in\cDr_{\sigma_2}$ of degree $n$, using Axiom (C3) we set
\begin{multline}
\tau^*_{\Sigma_1,\Sigma_2;\Sigma}(\sigma_1\tens\sigma_2)((\eta_1,\xi_1),\dots (\eta_{m+n},\xi_{m+n})) \\
 \defeq \frac{1}{(m+n)!}\sum_{\tau\in S^{m+n}} (-1)^{|\tau|} \sigma_1(\eta_{\tau(1)},\dots,\eta_{\tau(m)})\sigma_2(\xi_{\tau(m+1)},\dots,\xi_{\tau(m+n)}) .
\end{multline}
This is extended by associativity to arbitrary decompositions. It is straightforward to check that this satisfies Axiom (R2).

\subsection{Probability maps}

Given a region $M$, Axioms (C4) and (C5) yield a real vector space $L_M$ and a linear map $r_M:L_M\to L_{\partial M}$ so that the image is a hypermaximal neutral subspace of $L_{\partial M}$. This data determines uniquely a linear map $u_M:L_{\partial M}\to L_{\partial M}$ with the following properties \cite[Lemma~3.1]{Oe:freefermi}: (a) $u_M$ is involutive, i.e., it squares to the identity. (b) $u_M$ is an adapted anti-isometry, i.e., $u_M$ interchanges the positive and negative parts of $L_{\partial M}$ anti-isometrically. (c) $u_M$ is the identity on $L_{\tilde{M}}$.

We define the amplitude map $A_M:\cDrd_{\partial M}\to\R$ as follows. Given an ON-basis $\{\zeta_i\}_{i\in I}$ of $L_{\partial M}$ and $\sigma\in\cDr_{\partial M}$ of degree $n$ we set,
\begin{multline}
 A_M(\sigma)\defeq \frac{(2n)!}{2^n n!} (-1)^n\\
 \sum_{i_1,\dots,i_n\in I}  (-1)^{\sig{\zeta_{i_1}}+\dots+\sig{\zeta_{i_n}}}
 \sigma(\zeta_{i_1},\dots,\zeta_{i_n},u_M\zeta_{i_n},\dots,u_M\zeta_{i_1}) .
\end{multline}
Note that this expression is not necessarily well defined. We define $\cDrd_{\partial M}$ to be the subspace of $\cDr_{\partial M}$ where linear combinations of this expression are well defined (i.e., the sums converge) and are independent of the choice of basis. This subspace is dense.\footnote{We leave out the demonstration  here which can be done by using ``generating states'', in complete analogy to what was done in \cite{Oe:freefermi}.} Since $u_{\overline{M}}=u_M$ it is straightforward to verify $A_M(\sigma)=A_{\overline{M}}(\sigma)$. Thus, we satisfy Axiom (R4).

Axioms (R3x) and (R5a) are also satisfied. We leave the verification as an exercise to the reader.

\subsection{Composition}
\label{sec:comp}

In contrast to the other axioms, the demonstration of Axiom~(R5b) or (R5b*) concerning the gluing of regions along hypersurfaces is rather non-trivial. Since this has been discussed at length in \cite{Oe:freefermi} for the FFA scheme, we will be very brief here and assume familiarity of the reader with the respective part of that paper.

Given a hypersurface $\Sigma$ we consider the direct system $\{L_{\Sigma,\alpha}\}_{\alpha\in A_{\Sigma}}$ of all finite dimensional Krein subspaces of $L_{\Sigma}$. Denote by $Q_{\Sigma,\alpha}$ the orthogonal projector onto $L_{\Sigma,\alpha}$. We induce from this a direct system $\{\cDr_{\Sigma,\alpha}\}_{\alpha\in A_{\Sigma}}$ of finite dimensional Krein subspaces of $\cDr_{\Sigma}$ as follows. Let $\cDr_{\Sigma,\alpha}$ be the subspace of $\cDr_{\Sigma}$ of those elements $\sigma$ that satisfy $\sigma=\sigma\circ Q_{\Sigma,\alpha}^{\tens n}$, where $n$ is the degree of $\sigma$. Note that $\cDr_{\Sigma,\alpha}$ is dual to the finite dimensional Fock space $\cF(L_{\Sigma,\alpha})$ which is a quotient of $\cDr_{\Sigma}$, but is also canonically isomorphic to this Fock space via the inner product. In particular, if $L_{\Sigma,\alpha}$ has dimension $d$, then $\cDr_{\Sigma,\alpha}$ has dimension $2^d$. We denote the orthogonal projector from $\cDr_{\Sigma}$ to $\cDr_{\Sigma,\alpha}$ by $P_{\Sigma,\alpha}$.
We recall what we mean by a direct limit of a collection $\{k_{\alpha}\}_{\alpha\in A_{\Sigma}}$ of real numbers indexed by $A_{\Sigma}$. We say that $k=\varinjlim_{\alpha} k_{\alpha}$ if for any $\epsilon >0$ there exists $\beta\in A_{\Sigma}$ such that,
\begin{equation}
 | k - k_{\gamma}|< \epsilon\quad \forall \gamma\ge\beta .
\end{equation}

Assume the geometric context of Axiom~(C7) or (R5b) or (R5b*). For any $\alpha\in A_{\Sigma}$, define the real number $\hat{c}_{\alpha}(M;\Sigma,\overline{\Sigma'})$, abbreviated as $\hat{c}_{\alpha}$, as follows,
\begin{equation}
\hat{c}_{\alpha} \defeq \sum_{i \in I} (-1)^{\sig{\xi_i}} A_M\left(\tau^*_{\Sigma_1,\Sigma,\overline{\Sigma'};\partial M}(\one\tens P_{\Sigma,\alpha}\xi_i\tens P_{\overline{\Sigma},\alpha}\xi_i)\right),
\end{equation}
where $\{\xi_i\}_{i\in I}$ is an arbitrary ON-basis of $\cDr_{\Sigma}$.

The key technical result in \cite{Oe:freefermi} underlying the proof of Axiom (R5b*) or (R5b) was Lemma~9.3 of that paper. We shall present a result with an analogous scope for the present context. Axiom~(C7) implies that given $\phi\in L_{\tilde{M}_1}$ there exists $\tilde{\phi}\in L_{\Sigma}$ such that $(\phi,\tilde{\phi},\tilde{\phi})\in L_{\tilde{M}}$. This was exploited crucially in Lemma~9.3 of \cite{Oe:freefermi}. Here we need a slightly more general statement as follows.
\begin{lem}
\label{lem:gluesolv}
Let $\phi\in L_{\Sigma_1}$. Then there exist $\phi',\phi''\in L_{\Sigma}$ such that,
\begin{equation}
 u_M(\phi,\phi',\phi'') = (u_{M_1}(\phi),\phi'',\phi') .
\label{eq:gluesolv}
\end{equation}
\end{lem}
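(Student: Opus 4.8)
The plan is to derive the identity from the three defining properties of the maps $u_M$ and $u_{M_1}$ stated just before the lemma (coming from \cite[Lemma~3.1]{Oe:freefermi}): involutivity, being an adapted anti-isometry, and being the identity on the respective spaces $L_{\tilde M}$, $L_{\tilde M_1}$. Since both $u_M$ and $u_{M_1}$ are completely characterized by these three properties, the strategy is to \emph{construct} $\phi',\phi''$ directly and then check that the claimed equation is consistent with the characterizing properties — equivalently, to show that the map on $L_{\partial M}=L_{\Sigma_1}\oplus L_{\Sigma}\oplus L_{\overline{\Sigma'}}$ defined by $(\phi,\psi,\chi)\mapsto (u_{M_1}(\phi),\chi',\psi')$ for suitable $\psi',\chi'$ (involving the gluing data) actually \emph{is} $u_M$ by uniqueness.

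First I would recall the geometric set-up of Axiom (C7): $\partial M$ decomposes as $\Sigma_1\cup\Sigma\cup\overline{\Sigma'}$, $M_1$ is the self-gluing of $M$ along $\Sigma,\overline{\Sigma'}$, and the relevant classical gluing axiom says that $L_{\tilde M_1}$ consists of those $\phi\in L_{\Sigma_1}$ for which there exists $\tilde\phi\in L_\Sigma$ with $(\phi,\tilde\phi,\tilde\phi)\in L_{\tilde M}$, and more generally $L_{\tilde M}$ projects onto $L_{\tilde M_1}$ with this diagonal-type structure. The key input, as the text flags, is the analogue of Lemma~9.3's observation: given $\phi\in L_{\Sigma_1}$ (now \emph{not} assumed to lie in $L_{\tilde M_1}$), one still wants elements of $L_\Sigma$ controlling the $\Sigma$ and $\overline{\Sigma'}$ components of $u_M(\phi,0,0)$, and then one superposes with the diagonal elements of $L_{\tilde M}$. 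So the concrete steps: (1) write $u_M(\phi,0,0) = (\alpha,\beta,\gamma)$ with $\alpha\in L_{\Sigma_1}$, $\beta\in L_\Sigma$, $\gamma\in L_{\overline{\Sigma'}}\cong L_\Sigma$; (2) use that $u_M$ is the identity on $L_{\tilde M}$ together with the diagonal structure of $L_{\tilde M}$ to pin down $\alpha$ as $u_{M_1}(\phi)$ modulo $L_{\tilde M_1}$, and then absorb the $L_{\tilde M_1}$-ambiguity by adding a suitable diagonal element $(\delta,\tilde\delta,\tilde\delta)\in L_{\tilde M}$, which is fixed by $u_M$, to arrange the first component to be exactly $u_{M_1}(\phi)$; (3) set $\phi' \defeq$ the resulting second component and $\phi''\defeq$ the third; (4) verify the swap $u_M(\phi,\phi',\phi'')=(u_{M_1}(\phi),\phi'',\phi')$ by applying involutivity of $u_M$ to the equation from step (1)–(3), which exchanges the roles of $(\phi,\phi',\phi'')$ and its image, giving the crossed pair $(\phi'',\phi')$ in the last two slots. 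The sign/order reversal between $\phi'$ and $\phi''$ is exactly what the involutivity, combined with the symmetry of the gluing along $\Sigma$ vs.\ $\overline{\Sigma'}$, produces.

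The main obstacle I expect is step (2): carefully tracking how the characterizing property ``$u_M = \mathrm{id}$ on $L_{\tilde M}$'' interacts with the gluing axiom to force the first component to equal $u_{M_1}(\phi)$ rather than merely something congruent to it modulo $L_{\tilde M_1}$, and simultaneously ensuring the last two components are forced to be a \emph{swapped} pair rather than independent. This is where one must use that $u_{M_1}$ is \emph{itself} defined by the same universal property relative to $M_1$, and that $L_{\tilde M_1}$, $L_{\tilde M}$ are hypermaximal neutral — so that the anti-isometry property of $u_M$ plus involutivity leaves no slack. A clean way to organize this is to argue by uniqueness: define $\tilde u: L_{\partial M}\to L_{\partial M}$ by $\tilde u(\phi,\psi,\chi) \defeq u_M(\phi,\psi,\chi)$ but then observe that the map sending $(\phi,\psi,\chi)$ to the right-hand side of \eqref{eq:gluesolv}-type expressions defines an involutive adapted anti-isometry equal to the identity on $L_{\tilde M}$, hence equals $u_M$; the existence of $\phi',\phi''$ then falls out of evaluating at $(\phi,0,0)$. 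The bulk of the remaining work — that this candidate map is anti-isometric and involutive — is a direct computation using Axiom (C2)–(C3) type relations for the inner products under decomposition and orientation reversal, and I would not grind through it here.
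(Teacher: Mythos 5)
The paper itself does not spell out a proof of this lemma---it only records that the argument ``exploits principally Axiom~(C7)''---so your proposal has to be judged on whether it would actually close the argument, and as written it does not. The central difficulty is your step~(2). If the first component of $u_M(\phi,0,0)$ is some $\alpha\neq u_{M_1}(\phi)$, you propose to fix this by adding a diagonal element $(\delta,\tilde\delta,\tilde\delta)\in L_{\tilde{M}}$, which $u_M$ fixes. But adding it to the \emph{input} replaces $(\phi,0,0)$ by $(\phi+\delta,\tilde\delta,\tilde\delta)$: you would then be proving the statement for $\phi+\delta$, not for $\phi$. The only diagonal elements that leave the first input slot untouched are those with $\delta=0$, and precisely those cannot alter the first output slot either. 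Moreover, the assertion that $\alpha\equiv u_{M_1}(\phi)$ modulo $L_{\tilde{M}_1}$ is stated without justification, and your step~(4) is circular: involutivity applied to $u_M(\phi,\phi',\phi'')=(\alpha,\beta,\gamma)$ only returns $(\phi,\phi',\phi'')$; it does not by itself force the output's last two slots to be the \emph{swapped} pair $(\phi'',\phi')$. Your fallback ``uniqueness'' argument also fails for the natural candidate: the map $(\phi,\psi,\chi)\mapsto(u_{M_1}\phi,\chi,\psi)$ is not the identity on all of $L_{\tilde{M}}$ (only on its diagonal part coming from $L_{\tilde{M}_1}$), so it cannot coincide with $u_M$; indeed the lemma is an existence statement for special pairs $(\phi',\phi'')$ depending on $\phi$, not a formula for $u_M$ on a whole subspace.

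A route that does work, and which you almost touch, is to split $\phi=\phi_++\phi_-$ along the $\pm1$ eigenspaces of $u_{M_1}$ (these exist since $u_{M_1}^2=\mathrm{id}$, and the $+1$ eigenspace is $L_{\tilde{M}_1}$). For $\phi_+$, Axiom~(C7) supplies $\tilde\phi\in L_{\Sigma}$ with $(\phi_+,\tilde\phi,\tilde\phi)\in L_{\tilde{M}}$, which $u_M$ fixes, giving the identity with $\phi'=\phi''=\tilde\phi$. The genuinely new content, which your proposal never addresses, is the $-1$ part: one must produce $\xi\in L_{\Sigma}$ with $(\phi_-,\xi,-\xi)$ in the $-1$ eigenspace of $u_M$; then $\phi'\defeq\tilde\phi+\xi$, $\phi''\defeq\tilde\phi-\xi$ yields (\ref{eq:gluesolv}) by linearity. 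Establishing that such a $\xi$ exists is where the exactness in (C7), the hypermaximal neutrality of $L_{\tilde{M}}$ and the adapted anti-isometry property must actually be combined, and this is the step your sketch defers to an unspecified ``direct computation''.
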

Again the proof, which we leave to the reader, exploits principally Axiom~(C7). We also need the following Lemma which can be proven simply by choosing a basis.
\begin{lem}
\label{lem:solvgen}
Let $\alpha_1\in A_{\Sigma_1}$. Then there exists $\alpha\in A_{\Sigma}$ such that for all $\phi\in L_{\Sigma_1,\alpha_1}$ there are $\phi',\phi''\in L_{\Sigma,\alpha}$ satisfying Equation~(\ref{eq:gluesolv}) of Lemma~\ref{lem:gluesolv}.
\end{lem}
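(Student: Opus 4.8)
The statement is a finite-dimensional ``uniformizing'' version of Lemma~\ref{lem:gluesolv}: we must show that a single finite-dimensional subspace $L_{\Sigma,\alpha}$ suffices to solve \eqref{eq:gluesolv} simultaneously for all $\phi$ ranging over the given finite-dimensional subspace $L_{\Sigma_1,\alpha_1}$. The natural approach is to make the word ``by choosing a basis'' precise. First I would fix an ON-basis $\{\phi_1,\dots,\phi_k\}$ of the finite-dimensional Krein space $L_{\Sigma_1,\alpha_1}$. Applying Lemma~\ref{lem:gluesolv} to each basis vector $\phi_j$ yields vectors $\phi_j',\phi_j''\in L_{\Sigma}$ with $u_M(\phi_j,\phi_j',\phi_j'')=(u_{M_1}(\phi_j),\phi_j'',\phi_j')$. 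Now let $L_{\Sigma,\alpha}$ be \emph{any} finite-dimensional Krein subspace of $L_{\Sigma}$ containing all $2k$ vectors $\phi_1',\dots,\phi_k',\phi_1'',\dots,\phi_k''$ (such a subspace exists because $A_{\Sigma}$ is the directed system of \emph{all} finite-dimensional Krein subspaces, so it is cofinal under inclusion, and a finite set of vectors spans a finite-dimensional subspace which can be enlarged to a Krein subspace). This $\alpha$ is the claimed index.

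Next I would verify that this $\alpha$ works for an arbitrary $\phi\in L_{\Sigma_1,\alpha_1}$, not just the basis vectors. Write $\phi=\sum_{j=1}^k a_j\phi_j$ with $a_j\in\R$. The key point is \emph{linearity}: the maps $u_M:L_{\partial M}\to L_{\partial M}$ and $u_{M_1}:L_{\partial M_1}\to L_{\partial M_1}$ are linear (they are the involutions furnished by \cite[Lemma~3.1]{Oe:freefermi}), the decomposition/gluing identifications $L_{\partial M}\cong L_{\Sigma_1}\oplus L_{\Sigma}\oplus L_{\overline{\Sigma'}}$ (Axiom~(C3)) are linear, and the triple notation $(\phi,\phi',\phi'')$ is just a linear coordinate presentation. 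Hence setting $\phi'\defeq\sum_j a_j\phi_j'$ and $\phi''\defeq\sum_j a_j\phi_j''$, both of which lie in $L_{\Sigma,\alpha}$ by construction, we get
\begin{equation}
 u_M(\phi,\phi',\phi'')=\sum_j a_j\, u_M(\phi_j,\phi_j',\phi_j'')=\sum_j a_j\,(u_{M_1}(\phi_j),\phi_j'',\phi_j')=(u_{M_1}(\phi),\phi'',\phi'),
\end{equation}
which is exactly \eqref{eq:gluesolv} for $\phi$. This completes the argument.

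The only subtlety — and the step I would be most careful about — is the existence of the enclosing \emph{Krein} subspace $L_{\Sigma,\alpha}$ in the indexing system $A_{\Sigma}$: one must check that the finitely many vectors $\phi_j',\phi_j''$ can be packaged into a finite-dimensional \emph{non-degenerate} subspace of the ambient Krein space on which the induced form remains a Krein form. This is a standard fact about Krein spaces (any finite-dimensional subspace is contained in a finite-dimensional regular/non-degenerate subspace, obtained by adding finitely many vectors to cure degeneracy), and it is implicitly what is meant when the excerpt speaks of ``the direct system $\{L_{\Sigma,\alpha}\}_{\alpha\in A_{\Sigma}}$ of all finite dimensional Krein subspaces of $L_{\Sigma}$'' having limit $L_{\Sigma}$. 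Everything else is genuinely routine: choose a basis, invoke Lemma~\ref{lem:gluesolv} pointwise on it, enlarge to a common finite-dimensional Krein subspace, and extend by linearity. No convergence, anomaly, or Fock-space machinery enters — this Lemma lives entirely at the classical level of the spaces $L$.
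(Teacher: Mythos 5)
Your proposal is correct and takes essentially the same route the paper intends: the paper's entire proof is the remark that the lemma ``can be proven simply by choosing a basis'', i.e., apply Lemma~\ref{lem:gluesolv} to a basis of $L_{\Sigma_1,\alpha_1}$, enclose the finitely many resulting vectors $\phi_j',\phi_j''$ in a finite-dimensional Krein subspace $L_{\Sigma,\alpha}$, and conclude by linearity of $u_M$, $u_{M_1}$ and of the decomposition of Axiom~(C3). For the enclosing step, note that the canonical decomposition $L_{\Sigma}=L_{\Sigma,+}\oplus L_{\Sigma,-}$ lets you take the span of the positive and negative components of the $\phi_j',\phi_j''$ separately, giving an adapted finite-dimensional Krein subspace directly, which is a slightly cleaner justification than the ``cure the degeneracy by adding vectors'' argument you sketch.
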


We are now ready to present the main result for the present quantization scheme, the demonstration that (R5b*) holds. This corresponds essentially to Theorem~10.2 in \cite{Oe:freefermi} and incorporates the analogue of Lemma~9.3 of that paper. We note that the elements $\sigma\in\cDr_{\Sigma_1}$ with the property that $\sigma\in\cDr_{\Sigma_1,\alpha_1}$ for some $\alpha_1\in A_{\Sigma_1}$ form a dense subspace of $\cDr_{\Sigma_1}$.
\begin{thm}
\label{thm:composition}
Let $\alpha_1\in A_{\Sigma_1}$ and $\sigma\in\cDr_{\Sigma_1,\alpha_1}$. Then, there exists $\alpha\in A_{\Sigma}$ such that for all $\gamma\ge\alpha$ and all ON-basis $\{\xi_{i}\}_{i\in I}$ of $L_{\Sigma}$ the following identity holds.
\begin{equation}
 A_{M_1}(\sigma)\cdot \hat{c}_{\gamma}(M;\Sigma,\overline{\Sigma'})
 = \sum_{i\in I} (-1)^{\sig{\xi_i}} A_M\left(\tau^*_{\Sigma_1,\Sigma,\overline{\Sigma'};\partial M}(\sigma\tens P_{\gamma}\xi_i\tens P_{\gamma}\xi_i)\right) .
\label{eq:fincompid}
\end{equation}
In particular, for the dense subspace of elements $\sigma\in\cDr_{\Sigma_1}$ of this form, the renormalized composition identity (\ref{eq:rencompid}) and thus Axiom~(R5b*) holds.
\end{thm}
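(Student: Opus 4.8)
The plan is to mimic the structure of Theorem~10.2 of \cite{Oe:freefermi}, adapted to the twist-modified probability maps $A_M$ of Section~\ref{sec:comp} and to the gluing relation of Lemma~\ref{lem:gluesolv}. First I would reduce to the combinatorial core by writing out both sides of (\ref{eq:fincompid}) in terms of the defining ON-basis expansion of $A_{M_1}$ and $A_M$. On the left, $A_{M_1}(\sigma)$ involves the map $u_{M_1}$ on $L_{\partial M_1}=L_{\Sigma_1}$; on the right, $A_M$ involves $u_M$ on $L_{\partial M}=L_{\Sigma_1}\oplus L_{\Sigma}\oplus L_{\overline{\Sigma'}}$, and the decomposition map $\tau^*_{\Sigma_1,\Sigma,\overline{\Sigma'};\partial M}$ antisymmetrizes the arguments $\eta$ (from $\Sigma_1$), $\xi$ (from $\Sigma$), and $\xi$ (the copy on $\overline{\Sigma'}$). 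Since $\sigma\in\cDr_{\Sigma_1,\alpha_1}$, only finitely many basis vectors of $L_{\Sigma_1}$ contribute, so all sums are finite and convergence/basis-independence issues disappear at the outset — this is exactly why we fix $\alpha_1$ and work on the dense subspace. Then I would invoke Lemma~\ref{lem:solvgen} to pick $\alpha\in A_{\Sigma}$ large enough that for every $\phi\in L_{\Sigma_1,\alpha_1}$ the solving vectors $\phi',\phi''$ of (\ref{eq:gluesolv}) already lie in $L_{\Sigma,\alpha}$; for $\gamma\ge\alpha$ the projectors $P_{\gamma}$ act as the identity on these.

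The heart of the argument is to perform, inside the $A_M$-sum on the right, the substitution dictated by Lemma~\ref{lem:gluesolv}. Concretely, when we expand $A_M$ we get terms of the schematic shape $\sigma(\ldots)\cdot(\text{product over }\Sigma,\overline{\Sigma'}\text{ basis vectors and their }u_M\text{-images})$. The vectors entering $\sigma$ range over an ON-basis of $L_{\Sigma_1}$; for each such choice, $u_M$ applied to the full tuple produces, by (\ref{eq:gluesolv}), a tuple whose $\Sigma_1$-component is exactly $u_{M_1}(\phi)$ while the $\Sigma$ and $\overline{\Sigma'}$ components get swapped and replaced by $\phi',\phi''$. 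The summation over the ON-basis $\{\xi_i\}$ of $L_{\Sigma}$, weighted by $(-1)^{\sig{\xi_i}}$, then collapses: the two factors of $\xi_i$ (one from each copy) together with the orientation-reversal twist built into $\tau^*_{\Sigma_1,\Sigma,\overline{\Sigma'};\partial M}$ and into the $\overline{\Sigma'}$-identification reproduce precisely the $u_{M_1}$-pairing that defines $A_{M_1}(\sigma)$, up to the scalar $\hat{c}_{\gamma}$, which is defined in Section~\ref{sec:comp} to absorb exactly the residual $\Sigma$-dependent sum (the $\one$-slot version). Tracking the sign factors — the $(-1)^{|\tau|}$ from antisymmetrization, the $(-1)^{\sig{\cdot}}$ Krein weights, the f-grading signs in $\tau^*$, and the $(-1)^n$ and $(2n)!/(2^n n!)$ combinatorial prefactors in $A_M$ versus $A_{M_1}$ — and checking they match is the bookkeeping that must be done carefully but is routine given the corresponding computation in \cite{Oe:freefermi}.

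The main obstacle, as in the FFA case, is the sign accounting around the twist: the present $\tau^*$ and the present orientation identification $\sigma_{\overline{\Sigma}}(\xi_1,\dots,\xi_n)=\sigma_{\Sigma}(\xi_n,\dots,\xi_1)$ differ from those of \cite{Oe:dmf} and from the ones implicitly underlying \cite[Theorem~10.2]{Oe:freefermi}, so one cannot simply cite that theorem; one must verify that the twist signs conspire so that the two copies of $\xi_i$ on $\Sigma$ and $\overline{\Sigma'}$ pair with a relative sign making the $u_M\to u_{M_1}$ reduction work. I would handle this by the same device used throughout the paper: check the identity first on \emph{generating states} (coherent-type elements of $\cDr_{\Sigma_1}$), where $\sigma$ factorizes and the multilinear bookkeeping reduces to a Gaussian/Pfaffian manipulation, then extend by density and (multi)linearity. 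Finally, once (\ref{eq:fincompid}) is established for $\gamma\ge\alpha$, the renormalized identity (\ref{eq:rencompid}) follows immediately: the bracketed expression in (\ref{eq:rencompid}) is identically zero for all $\gamma\ge\alpha$, so its $\varinjlim_{\alpha}$ vanishes, and since such $\sigma$ are dense in $\cDr_{\Sigma_1}$ (as noted just before the theorem) Axiom~(R5b*) holds on all of $\cDr_{\Sigma_1}$ in the stated sense.
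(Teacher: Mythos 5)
Your proposal follows essentially the same route as the paper's (itself only sketched) proof: fix $\alpha$ via Lemma~\ref{lem:solvgen}, expand both probability maps in ON-bases, use the relation of Lemma~\ref{lem:gluesolv} to collapse the $\xi_i$-sum into the $u_{M_1}$-pairing defining $A_{M_1}$, and port the combinatorics from Lemma~9.3/Theorem~10.2 of \cite{Oe:freefermi} while re-checking the twist signs. The only difference is presentational: where you propose to organize the bookkeeping via generating states, the paper isolates the two inner-product identities (the replacements for Lemmas~A.1 and A.2 of \cite{Oe:freefermi}, expressing $g_{\partial M}(\cdot,u_M(\cdot))$ in terms of $g_{\Sigma}$, $g_{\overline{\Sigma}}$ and the solving vectors $\phi',\phi''$) that make the reduction go through term by term.
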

\begin{proof} (Sketch)
We set $\alpha\in A_{\Sigma}$ according to Lemma~\ref{lem:solvgen}. It is clear that the relevant property is conserved for all $\gamma\ge\alpha$. The proof of the identity (\ref{eq:fincompid}) can then be constructed by following almost exactly along the lines of the proof of Lemma~9.3 of \cite{Oe:freefermi} in Appendix~A of that paper. Lemmas~A.1 and A.2 of that paper are for this purpose to be replaced by the following lemmas, (using the notation introduced in Lemma~\ref{lem:gluesolv}).
\begin{lem}
Let $\xi\in L_{\Sigma}$ and $\phi\in L_{\Sigma_1}$. Then,
\begin{align}
 g_{\partial M}\left((0,\xi,0), u_M(\phi,0,0)\right)
 & = g_{\Sigma}(\xi,\phi'') - g_{\partial M}\left((0,\xi,0),u_M(0,\phi',\phi'')\right), \\
 g_{\partial M}\left((0,0,\xi), u_M(\phi,0,0)\right)
 & = g_{\overline{\Sigma}}(\xi,\phi') - g_{\partial M}\left((0,0,\xi),u_M(0,\phi',\phi'')\right) .
\end{align}
\end{lem}
\begin{lem}
Let $\phi_1,\phi_2\in L_{\Sigma_1}$. Then,
\begin{multline}
 g_{\Sigma_1}(\phi_1,u_{M_1}(\phi_2))
 = g_{\partial M}\left((\phi_1,0,0),u_M(\phi_2,0,0)\right) \\
 - g_{\partial M}\left((0,\phi_1',\phi_1''),u_M(0,\phi_2',\phi_2'')\right)
 + g_{\Sigma}(\phi_1',\phi_2'') + g_{\overline{\Sigma}}(\phi_1'',\phi_2') .
\end{multline}
\end{lem}
This concludes the sketch of the proof of Theorem~\ref{thm:composition}.
\end{proof}

In the special case that for a given classical theory the direct limits $\varinjlim_{\alpha}\hat{c}_\alpha$ exist for all admissible gluings of regions along hypersurfaces we say that this theory satisfies the \emph{integrability condition}, compare \cite{Oe:holomorphic,Oe:freefermi}. In this case we have validity of the slightly stronger Axiom~(R5b) with the gluing anomaly factors given by these limits, $\hat{c}\defeq\varinjlim_{\alpha}\hat{c}_\alpha$. This result is the analogue of Theorem~9.2 in \cite{Oe:freefermi}.

\section{Adding complex structure}
\label{sec:addcomplex}

In this section we show that the addition of complex structure to the classical data leads to the quantum theory satisfying in addition the positivity axioms of Section~\ref{sec:plusposax}. In total, the quantum theory then satisfies all the axioms of the positive formalism.
With the complex structure present one may alternatively quantize the classical theory with the FFA scheme introduced in \cite{Oe:freefermi}, outputting to the amplitude formalism. What we also show in this section is that the resulting quantizations exactly coincide, when the FFA scheme is combined with the ``modulus square'' operation detailed in Section~\ref{sec:twistedpf}.

Crucially, we have the option to add complex structure only on some hypersurfaces. To this end we consider a subcollection of \emph{polarized hypersurfaces} (see beginning of Section~\ref{sec:plusposax}) for which the data of the classical theory is augmented by complex structures according to the axioms of Appendix~\ref{sec:sclassax}. The following considerations are thus limited to these hypersurfaces and regions bounded by such hypersurfaces.

A complex structure on a real vector space allows to decompose functions on this vector space into holomorphic and anti-holomorphic parts. This applies in particular to elements of a Fock space over a real vector space since we have modeled them as functions. Given a polarized hypersurface $\Sigma$, we recall that the associated space $\cDr_{\Sigma}$ is the Fock space $\cFr(L_{\Sigma})$ over the real Krein space $L_{\Sigma}$. We decompose an element $\sigma\in\cFr_n(L_{\Sigma})$ uniquely as,
\begin{equation}
 \sigma(\xi_1,\dots,\xi_{n})
= \sum_{k=0}^{n} \frac{1}{n!}\sum_{\sigma\in S^{n}}(-1)^{|\sigma|}
 \psi^{\hol}_k(\xi_{\sigma(1)},\dots,\xi_{\sigma(k)})\psi^{\ahol}_{n-k}(\xi_{\sigma(n)},\dots,\xi_{\sigma(k+1)}) .
\label{eq:dechol}
\end{equation}
Here the $\psi_k^{\hol}$ are holomorphic functions while the $\psi_k^{\ahol}$ are anti-holomorphic functions. Both types are required to be completely anti-symmetric.

We may now take the functions $\psi_k^{\hol}$ to be elements of the Fock space $\cF(L_{\Sigma})$ over $L_{\Sigma}$ understood as a \emph{complex} vector space. Here $L_{\Sigma}$ is equipped with the complex inner product given by (\ref{eq:complip}) and (\ref{eq:fsympl}) which induces the inner product of $\cF(L_{\Sigma})$. We follow the conventions of \cite{Oe:freefermi}. Given an arbitrary ON-basis $\{\zeta_j\}_{j\in K}$ of the \emph{real} Krein space $L_{\Sigma}$ the inner product on $\cF_n(L_{\Sigma})$ can be expressed as,
\begin{equation}
\langle \eta,\psi\rangle_{\Sigma}
= n! \sum_{j_1,\dots,j_n\in K} (-1)^{\sig{\zeta_{j_1}}+\dots+\sig{\zeta_{j_n}}}
 \overline{\eta(\zeta_{j_1},\dots,\zeta_{j_n})}
 \psi(\zeta_{j_1},\dots,\zeta_{j_n}) .
\end{equation}
This is equivalent to formula (3.2) given in \cite{Oe:freefermi}.
Similarly, we may take the functions $\psi_k^{\ahol}$ to be elements of the Fock space $\cF(L_{\overline{\Sigma}})$.

Recall that the state space $\cH_{\Sigma}$ associated to the polarized hypersurface $\Sigma$ in the FFA scheme is precisely the Fock space $\cF(L_{\Sigma})$. Take $\cD_{\Sigma}$ to denote the complexification of $\cDr_{\Sigma}$ (i.e., extending real valued to complex valued functions). Then it is easy to verify that the map $\cD_{\Sigma}\to\cH_{\Sigma}\tens\cH_{\overline{\Sigma}}$ induced by the decomposition (\ref{eq:dechol}) relates the inner products of these spaces precisely as determined by formula (\ref{eq:iptwist}). That is, we have successfully identified on the level of polarized hypersurfaces the (complexified) output of the FFR scheme with the output of the FFA scheme with subsequent ``modulus square'' operation.

It now remains to verify that given this identification all further structures associated to polarized hypersurfaces or regions bounded be such really coincide between the (complexified) output of the FFR scheme as described in Section~\ref{sec:fockfeyn} and the output of the ``modulus squared'' FFA scheme of \cite{Oe:freefermi} according to Section~\ref{sec:twistedpf}. We limit ourselves here to two remarks and leave this straightforward task otherwise to the reader.
Firstly, it is useful for this purpose to rewrite the amplitude map of the FFA scheme using an ON-basis $\{\zeta_j\}_{j\in K}$ of the \emph{real} Krein space $L_{\partial M}$ as follows,
\begin{multline}
 \rho_M(\psi)= \frac{(2n)!}{2^n n!} (-1)^n \\
 \sum_{j_1,\dots,j_n\in K}  (-1)^{\sig{\zeta_{j_1}}+\dots+\sig{\zeta_{j_n}}}
 \psi(\zeta_{j_1},\dots,\zeta_{j_n},u_M\zeta_{j_n},\dots,u_M\zeta_{j_1}) ,
\end{multline}
where $\psi\in\cH^{\ds}_{\partial M}$ is of degree $2n$. This is easily verified to be equivalent to formula (8.9) in \cite{Oe:freefermi}.
Secondly, the following observation is helpful. Given a continuous complex bilinear map $a:L_{\Sigma}\tens L_{\Sigma}\to\C$ we have,
\begin{equation}
\sum_{j\in K} (-1)^{\sig{\zeta_j}} a(\zeta_j,\zeta_j)=0 .
\end{equation}
Here $\{\zeta_j\}_{j\in K}$ is an ON-basis of the Krein space $L_{\Sigma}$ viewed either as a real or as a complex Krein space.

The validity of the additional positive axioms of Section~\ref{sec:plusposax} in the polarized sector of the FFA scheme can now be simply inferred from the validity of those axioms (as part of the positive formalism) in the ``modulus square'' of the FFR scheme \cite{Oe:freefermi}, due to the equivalence of the two.

\section{Discussion and outlook}
\label{sec:outlook}

The standard formulation of quantum theory in terms of a single Hilbert space of states with observables as operators on it treats space and time in very different ways. In particular, it depends on a predetermined notion of time and it is not manifestly local in space, even though the physics seems to be. This is awkward from a special relativistic point of view and fatal from a general relativistic point of view. It motivates the search for a manifestly local formulation of fundamental quantum physics, compatible with general relativistic principles.

With the advent of topological quantum field theory (TQFT) due to E.~Witten, G.~Segal, M.~Atiyah and others at the end of the 1980s the right mathematical tools finally seemed to be at hand. Indeed, a mathematical revolution ensued, impacting the fields of low-dimensional topology, algebraic topology, knot theory and category theory among others. However, on the physics side applications have been limited mainly to 2-dimensional conformal field theory, and toy models such as Chern-Simons theory that possess finite-dimensional state spaces only.

With the general boundary formulation (GBF) the TQFT approach is pushed further to a full blown formulation of quantum theory. The aim of this program is to develop a more suitable basis of the foundations of quantum theory, more fundamental in fact than the standard formulation. The crucial test for the GBF, as for any theory of physics, is the agreement with experiment. In the present case, with the GBF being a framework rather than an actual model, this means that the known models successfully describing fundamental quantum physics must be shown to be compatible with the GBF. Concretely, quantum field theory (or a ``realistic'' subset thereof) must be shown to be compatible with the GBF.

A key obstacle for TQFT-type approaches in general and the GBF in particular to achieve a local description of realistic quantum field theories is precisely the state locality problem outlined in the Introduction. It should be emphasized that this is a basic issue, quite separate from other important challenges such as those arising from non-linearity, constraints or gauge symmetries. It is thus suitable to address this problem in a linear context, i.e., via free field theory, as we have done in this paper.

The axioms of Appendix~\ref{sec:pclassax} appear to provide a reasonable description of classical field theories underlying simple free bosonic and fermionic quantum field theories \cite{Oe:holomorphic,Oe:freefermi}. To be concrete and pick a physically relevant model consider the Dirac field theory on Minkowski spacetime. We have shown in \cite[Section~5.1]{Oe:freefermi} explicitly how a Krein space of suitable $L^2$-sections of the spinor bundle is associated to any smooth hypersurface in Minkowski spacetime.\footnote{The prescription in \cite[Section~5.1]{Oe:freefermi} restricts the admissible hypersurfaces to be such that their null subset has measure zero.} What is less obvious is that the space of solutions of the Dirac equation in a spacetime region should be a hypermaximal neutral subspace of this space of data on the boundary of the region. Simple Lagrangian arguments show that it should be neutral at least and additional hypermaximal neutrality arises then from a kind of non-degeneracy condition. We conjecture that this is satisfied for sufficiently regular spacetime regions such as for example cubes (with spacelike and timelike edges) or lenses (i.e., regions bounded by a pair of compact spacelike hypersurfaces joined along a null sphere).

In order to quantize the theory we need additional structure, however. Following traditional ideas of quantization (and geometric quantization in particular), this can be encoded in a complex structure on each of the spaces of hypersurface data (see the axioms listed in Appendix~\ref{sec:sclassax}). The scheme presented in \cite{Oe:freefermi} then yields the quantum theory in terms of the amplitude formalism which directly generalizes the concept of ``local Hilbert spaces''. The theory so obtained can be compared with the traditional form of the Dirac quantum field theory in the standard formulation and is indeed ``correct''. The problem with this is that the relevant ``correct'' complex structures only exist for a class of hypersurfaces without boundary (including all Cauchy hypersurfaces). This is because they arise from pseudodifferential and thus non-local operators rather than from differential operators.

Consequently, this quantization scheme and the obtained quantum theory are only defined for a very restricted class of regions and hypersurfaces. In particular, the regions in question do not include compact regions. So a truly local description of the physics is not obtained. This is the state locality problem in the present context.

The strategy we have followed in this paper to address the problem consists firstly in changing the formalism in which to express the quantum theory. Rather than the amplitude formalism we use the positive formalism introduced in \cite{Oe:dmf}. This is somewhat (but not completely) analogous to working with mixed states rather than pure states in the standard formulation. Since the positive formalism contains all the operationally relevant information about a quantum theory that the amplitude formalism contains we are not losing anything by this step. Secondly, we exhibit, for the fermionic case, a quantization scheme that outputs to the positive formalism. We show, moreover, that this quantization scheme yields equivalent output to the ``correct'' scheme of \cite{Oe:freefermi} when applied to a classical field theory with complex structures, i.e., one that satisfies the axioms of both Appendices~\ref{sec:pclassax} and \ref{sec:sclassax}. The comparison is made via a functor from the amplitude formalism to the positive formalism, slightly modified from the one presented in \cite{Oe:dmf}, see Section~\ref{sec:twistedpf}.

Thirdly, we have shown that this new scheme (presented in Section~\ref{sec:fockfeyn}) works even when the classical theory satisfies only the axioms of Appendix~\ref{sec:pclassax} and no complex structures are present. In this case, the output is restricted to a ``real formalism'' (see Section~\ref{sec:realax}), a weakened version of the positive formalism with positive structures reduced to real ones. The state locality problem is avoided and we obtain perfectly well defined local spaces of ``generalized mixed states'' associated to hypersurfaces with boundary in the quantum theory.

What is more, the classical theory can be augmented with complex structures for only a selected collection of hypersurfaces. (We have referred to these as \emph{polarized hypersurfaces}.) This then leads to a corresponding augmentation by positive structures on the quantum side (axioms of Section~\ref{sec:plusposax}), restoring the full positive formalism for these hypersurfaces.

For the probability interpretation of the GBF the positive structures are crucial \cite{Oe:dmf}. Thus, losing them presents a serious issue. Note, however, that probabilities can only be extracted for measurements in spacetime regions. So positive structures are only relevant on boundaries of regions. Without losing predictive power we can therefore discard on the classical side at least the complex structures associated to hypersurfaces that are not boundaries. This includes all hypersurfaces that have boundaries themselves.

It remains to address the lack of positive structures on those hypersurfaces that are boundaries. Of course we do have complex structures (and thus induced positive structures) on some of these hypersurfaces, as explained above. The minimalist approach would be to simply accept the loss of predictive power for the regions whose boundaries cannot be equipped with complex structures. (Note that this only affects predictive power not present in the standard formulation either). A more interesting approach would be to seek a new way to construct the positive structures, possibly using weaker or different additional classical data.

The quantization scheme presented in Section~\ref{sec:fockfeyn} is restricted to fermionic field theory. This restriction was not motivated by any mathematical difficulty, but by a physical one. It is perfectly possible to formulate the scheme also for bosonic theories (except for the renormalization of the gluing anomaly, compare \cite{Oe:freefermi}). The problem is that the basic structure required on the space of germs of solutions on a hypersurface $\Sigma$ is that of a non-degenerate symmetric bilinear form $g_{\Sigma}$, exactly as in the fermionic case. But while this is natural in the fermionic case, the structure naturally present in the bosonic case is rather a non-degenerate anti-symmetric bilinear form $\omega_{\Sigma}$, compare Appendix~\ref{sec:pclassax}. The symmetric bilinear form $g_{\Sigma}$ is only obtained from this with the help of the complex structure (via formula (\ref{eq:bsym})), whose use we are precisely trying to avoid.

An important topic that we have not touched upon in the present paper is that of observables, or more generally that of quantum operations. See \cite{Oe:dmf} for some remarks in that direction. A proper development of these concepts in the present context could in turn help to clarify the notion of positivity.

We might also ask what physically measurable objects are encoded by states on hypersurfaces with boundaries. This question is of particular interest when the hypersurface is not polarized, i.e., when a complex structure is not available. Since we have used Fock spaces it is tempting to interpret the states as particle states in the usual manner. However, in the real formalism with real Fock spaces this would certainly not be correct. Indeed, we recall formula (\ref{eq:dechol}) from Section~\ref{sec:addcomplex} that shows how a real Fock state decomposes into the usual complex Fock states once we augment the hypersurface to a polarized hypersurface, i.e., add complex structure. In particular, a real (mixed) state of degree $n$ decomposes into products of complex states of degree $k$ and $n-k$ for $0\le k\le n$. Apart from this, to know if a real (mixed) state is even positive and thus physical in the conventional sense requires the complex structure.
A more operational approach to this problem would be to study the response of such states to local observables or other models for local (particle) measurement devices. A promising line of inquiry in this direction is suggested by the work of Colosi and Rovelli who introduced a notion of local particle as determined by a localized Hamiltonian operator \cite{CoRo:particle}. A more precise implementation of this idea in bosonic \cite{VRWL:locquanta} and fermionic \cite{KuWeLe:locdirac} quantum field theory was achieved subsequently by León and collaborators, using representations (equivalently: complex structures) inequivalent to the standard one. Generalizing this to the present context would require at the very least a basic understanding of physical observables, see the previous remark.

The positive formalism and its reduced ``real'' version introduced here also invite the contemplation of approaches to generalized states that are rather different from the Fock space approach. In particular, the notion of state of algebraic quantum field theory (AQFT) suggests itself for adaptation to this context. Rather than associate algebras of observables to regions as in AQFT, such algebras would be associated to hypersurfaces. The noncommutative product of the algebra would then simply arise from the ordered gluing of these hypersurfaces to themselves, viewed as slice regions. The space of generalized states could then be defined as the functionals on the algebra of the respective hypersurface. What is less straightforward in this approach is the encoding of the dynamics, i.e., the probability maps. Nevertheless, this could be the starting point of a somewhat different attack on the state locality problem as compared to the one presented here.

\subsection*{Acknowledgments}

This work was supported in part by UNAM--DGAPA--PAPIIT through project grant IN100212.

\appendix

\section{Spacetime systems}
\label{sec:sts}

We lay out here a formalization of the notion of spacetime in terms of a \emph{spacetime system} for the GBF. This formalization is similar to that in \cite{Oe:freefermi}, but with the crucial difference that we allow here hypersurfaces with boundaries and corresponding more general decompositions of these as in \cite{Oe:2dqym}.

There is a fixed positive integer $d\in\N$, the \emph{dimension} of spacetime. We are given a collection $\sts_0^{\textrm{c}}$ of connected oriented topological manifolds of dimension $d$, possibly with boundary, that we call connected \emph{regular regions}. Furthermore, there is a collection $\sts_1^{\mathrm{c}}$ of connected oriented topological manifolds of dimension $d-1$, possibly with boundary, that we call \emph{hypersurfaces}. The manifolds are either abstract manifolds or they are all concrete closed regular submanifolds of a given fixed \emph{spacetime manifold}. In the former case we call the spacetime system \emph{local}, in the latter we call it \emph{global}.

There is a notion of disjoint union both for regular regions and for hypersurfaces. This leads to the collection $\sts_0$, of all formal finite unions of elements of $\sts_0^{\mathrm{c}}$, and to the collection $\sts_1$, of all formal finite unions of elements of $\sts_1^{\mathrm{c}}$. In the local case, the unions may be realized concretely as disjoint unions. In the global case, only unions with members whose interiors are disjoint are allowed in $\sts_1$ and $\sts_0$. Note that in this case the elements of $\sts_1$ and $\sts_0$ cannot necessarily be identified with submanifolds of the spacetime manifold as overlaps on boundaries may occur.

For elements of $\sts_1$ there is a notion of \emph{decomposition}. Given a presentation of a hypersurface $\Sigma$ as the union of hypersurfaces $\Sigma_1,\dots,\Sigma_n$ we call this a decomposition if (a) each $\Sigma_i$ is closed in $\Sigma$ and (b) the intersection of $\Sigma_i$ with $\Sigma_j$ is contained in their boundaries for each $i$ and $j$ with $i\neq j$.

The collections $\sts_1$ and $\sts_0$ are closed under orientation reversal. Also, any boundary of a regular region in $\sts_0$ is a hypersurface in $\sts_1$. That is, taking the boundary defines a map $\partial:\sts_0\to\sts_1$. When we want to emphasize explicitly that a given manifold is in one of those collections we also use the attribute \emph{admissible}.

It is convenient to also introduce the concept of \emph{slice regions}.\footnote{In some previous papers slice regions were called ``empty regions''.} A slice region is topologically a hypersurface, but thought of as an infinitesimally thin region. Concretely, the slice region associated to a hypersurface $\Sigma$ will be denoted by $\hat{\Sigma}$ and its boundary is defined to decompose as the disjoint union $\partial \hat{\Sigma}=\Sigma\cup\overline{\Sigma}$. There is one slice region for each hypersurface (forgetting its orientation). We refer to regular regions and slice regions collectively as \emph{regions}.

There is also a notion of \emph{gluing} of regions. Suppose we are given a region $M$ with its boundary decomposing as the union $\partial M=\Sigma_1\cup\Sigma\cup\overline{\Sigma'}$, where $\Sigma'$ is a copy of $\Sigma$. ($\Sigma_1$ may be empty.) Then, we may obtain a new region $M_1$ by gluing $M$ to itself along $\Sigma,\overline{\Sigma'}$. That is, we identify the points of $\Sigma$ with corresponding points of $\Sigma'$ to obtain $M_1$. The resulting region $M_1$ might be inadmissible, in which case the gluing is not allowed.

Depending on the theory one wants to model, the manifolds may carry additional structure such as for example a differentiable structure or a metric. This has then to be taken into account in decompositions and gluings.

\section{Axioms of free classical field theory}
\label{sec:classax}

\subsection{Purely classical axioms}
\label{sec:pclassax}

Given a spacetime system, we axiomatize a linear classical field theory on the spacetime system as follows. This is a version of the axioms as put forward in \cite{Oe:freefermi}, but reduced to purely classical data, i.e., without complex structures. Also, the requirement for hypersurface decompositions to be disjoint is dropped.

\begin{itemize}
\item[(C1)] Associated to each hypersurface $\Sigma$ is a real vector space $L_{\Sigma}$. In the fermionic case $L_{\Sigma}$ is equipped with a real inner product $g_{\Sigma}$, making it into a separable Krein space. In the bosonic case $L_{\Sigma}$ is equipped with a non-degenerate anti-symmetric bilinear form $\omega_{\Sigma}$, making it into a symplectic vector space.
\item[(C2)] Associated to each hypersurface $\Sigma$ there is an (implicit) linear involution $L_\Sigma\to L_{\overline\Sigma}$, such that $g_{\overline{\Sigma}}=-g_{\Sigma}$ in the fermionic case and $\omega_{\overline{\Sigma}}=-\omega_{\Sigma}$ in the bosonic case.
\item[(C3)] Suppose the hypersurface $\Sigma$ decomposes into a union of hypersurfaces $\Sigma=\Sigma_1\cup\cdots\cup\Sigma_n$. Then, there is an (implicit) isomorphism $L_{\Sigma_1}\oplus\cdots\oplus L_{\Sigma_n}\to L_\Sigma$. The isomorphism preserves the inner product or the symplectic form.
\item[(C4)] Associated to each region $M$ is a real vector space $L_M$.
\item[(C5)] Associated to each region $M$ there is a linear map of real vector spaces $r_M:L_M\to L_{\partial M}$. In the fermionic case the image $L_{\tilde{M}}$ of $r_M$ is a real hypermaximal neutral subspace of $L_{\partial M}$. In the bosonic case the image $L_{\tilde{M}}$ of $r_M$ is a Lagrangian subspace of $L_{\partial M}$.
\item[(C6)] Let $M_1$ and $M_2$ be regions and $M= M_1\cup M_2$ be their disjoint union. Then $L_M$ is the direct sum $L_{M}=L_{M_1}\oplus L_{M_2}$. Moreover, $r_M=r_{M_1}+r_{M_2}$.
\item[(C7)] Let $M$ be a region with its boundary decomposing as a union $\partial M=\Sigma_1\cup\Sigma\cup \overline{\Sigma'}$, where $\Sigma'$ is a copy of $\Sigma$. Let $M_1$ denote the gluing of $M$ to itself along $\Sigma,\overline{\Sigma'}$ and suppose that $M_1$ is a region. Then, there is an injective linear map $r_{M;\Sigma,\overline{\Sigma'}}:L_{M_1}\toi L_{M}$ such that
\begin{equation}
 L_{M_1}\toi L_{M}\rightrightarrows L_\Sigma
\label{eq:xsbdy}
\end{equation}
is an exact sequence. Here the arrows on the right hand side are compositions of the map $r_M$ with the projections of $L_{\partial M}$ to $L_\Sigma$ and $L_{\overline{\Sigma'}}$ respectively (the latter identified with $L_\Sigma$). Moreover, the following diagram commutes, where the bottom arrow is the projection.
\begin{equation}
\xymatrix{
  L_{M_1} \ar[rr]^{r_{M;\Sigma,\overline{\Sigma'}}} \ar[d]_{r_{M_1}} & & L_{M} \ar[d]^{r_{M}}\\
  L_{\partial M_1}  & & L_{\partial M} \ar[ll]}
\end{equation}
\end{itemize}

\subsection{Adding complex structure}
\label{sec:sclassax}

We provide here a separate list of axioms encoding additional semiclassical information in the form of complex structures on hypersurfaces. In order to allow for the possibility that this additional structure might not be present on all hypersurfaces we employ the notion of polarized hypersurface introduced in Section~\ref{sec:plusposax}. If all hypersurfaces are polarized the axioms together with those of Appendix~\ref{sec:pclassax} are equivalent to those of Section~4.2 of \cite{Oe:freefermi}.

\begin{itemize}
\item[(C1+)] For each polarized hypersurface $\Sigma$, the vector space $L_{\Sigma}$ is equipped with a complex structure $J_{\Sigma}:L_{\Sigma}\to L_{\Sigma}$. In the fermionic case the complex structure preserves the Krein space decomposition and is an isometry. In the bosonic case the complex structure preserves the symplectic structure and leads to a complete inner product.
\item[(C2+)] The complex structure changes sign under change of orientation. That is, $J_{\overline{\Sigma}}=-J_{\Sigma}$.
\item[(C3+)] Suppose a polarized hypersurface $\Sigma$ decomposes into a union of polarized hypersurfaces $\Sigma=\Sigma_1\cup\cdots\cup\Sigma_n$. Then, the associated complex structures are related via, $J_{\Sigma}= J_{\Sigma_1}+\cdots +J_{\Sigma_n}$.
\item[(C5+)] Suppose the boundary $\partial M$ of the region $M$ is a polarized hypersurface. In the fermionic case we require $J_{\partial M}\circ u_M=-u_M\circ J_{\partial M}$.
\end{itemize}
Note that the map $u_M$ in (C5+) is the the map determined from the hypermaximal neutral subspace $L_{\tilde{M}}\subseteq L_{M}$ via Lemma~3.1 of \cite{Oe:freefermi}.

\bibliographystyle{stdnodoi} 
\bibliography{stdrefsb}
\end{document}